\newcommand{\xx}{1}
\newcommand{\yy}{1}
\newcommand{\stage}[3]{\tikz{\node[shape=circle,draw,inner sep=1pt,fill=#1]{$#3_{#2}$};}}
\newcommand{\stages}[2]{\tikz{\node[shape=circle,draw,inner sep=1pt,fill=#1,minimum size=0.5cm]{\scriptsize{$v_{#2}$}};}} 
\newcommand{\leaf}{\tikz{\node[shape=circle,draw,inner sep=1.5pt,fill=white] {};}}
\newcommand\independent{\protect\mathpalette{\protect\independenT}{\perp}}
\def\independenT#1#2{\mathrel{\rlap{$#1#2$}\mkern2mu{#1#2}}}
\newcommand{\Do}{\operatorname{do}}
\newcommand{\CID}{\operatorname{CID}}
\newtheorem{definition}{Definition}
\newtheorem{proposition}{Proposition}
\begin{document}

%

%

\twocolumn[

\aistatstitle{Context-Specific Causal Discovery for Categorical Data Using Staged Trees}

\aistatsauthor{ Manuele Leonelli \And Gherardo Varando}

\aistatsaddress{School of Science and Technology \\ IE University, Madrid, Spain \And  Image Processing Laboratory \\ Universitat de València, València, Spain} ]

\begin{abstract}
Causal discovery algorithms aim at untangling complex causal 
relationships from data. 
Here, we study causal discovery and inference methods based on staged tree models, which can represent complex and asymmetric causal relationships between categorical variables. We provide a first graphical representation of the equivalence class of a staged tree, by looking only at a specific subset of its underlying independences.
We further define a new pre-metric, inspired by the widely used structural intervention distance, to quantify the closeness between two staged trees in terms of their corresponding causal inference statements. A simulation study highlights the efficacy of staged trees in uncovering complexes, asymmetric causal relationships from data, and real-world data applications illustrate their use in practical causal analysis.
\end{abstract}

\section{INTRODUCTION}
One of the major tasks in all areas of science is to uncover causal relationships between variables of interest. Since experimental data is in many cases unavailable, this task often comes down to discover such relationships using  observational data only. This is usually referred to as \emph{causal discovery}. One of the most common approaches in causal discovery, as well as in causal analysis, is to represent causal relationships via \emph{directed acyclic graphs} (DAGs).  If there is an edge pointing from one variable to another in such graphs, then the former is a
direct cause of the latter~\citep{Pearl2009}. The literature on causal discovery using DAGs is now extensive~\citep[see][for a review]{Glymour2019}. The most common approaches are the PC-algorithm~\citep{Spirtes2000}, greedy equivalence search~\citep{Chickering2002} and functional 
causal models~\citep{Hoyer2008}.

Although more efficient and scalable causal discovery algorithms are still being developed \citep[e.g.,  ][]{bhattacharya2021differentiable,Monti2020}, most of the recent literature has focused on continuous random variables only. Attention to causal discovery for observational discrete data has been  limited \citep[see e.g.][for exceptions]{Cai2018,Cowell2014,Huang2018,Peters2010}. The aim of this paper is to discuss flexible and powerful causal discovery algorithms for categorical data embedding complex and asymmetric variable relationships and to highlight their efficacy. 

Whilst most causal discovery is carried out via DAG models, here we consider staged tree models~\citep{Collazo2018,Smith2008} which, differently to DAGs, can represent a wide array of asymmetric causal effects between categorical variables. Bayesian MAP structural learning algorithms for this model class have been introduced~\citep{Collazo2016,Cowell2014,Freeman2011} as well as score-based ones~\citep{Leonelli2021,Silander2013}. A 
wide selection of score-based algorithms have been implemented in the open-source \texttt{stagedtreees} R package~\citep{Carli2020} and are used henceforth.
Other strategies for non-symmetric relationships in DAG have been proposed in the literature. Two main approaches are modeling CPTs with tree 
structures~\citep{Chickering1997, Boutilier1996, Pensar2016} or using labelled graphs~\citep{Pensar2015}.

Despite the importance of uncovering causality from data, only one causal discovery algorithm for staged trees has been proposed \citep{Cowell2014}. Here we provide a suite of 
discovery algorithms based on the dynamic programming approach of \cite{Silander2013}, 
all freely available in the \texttt{stagedtrees} R package. 
Furthermore, we perform an extensive simulation study to assess their effectiveness, demonstrating that staged trees are extremely powerful in discovering complex dependence structures and in general outperform DAGs for categorical data.

Just as with DAGs, causal discovery with staged trees can be effectively carried out only if coupled with a method to compute the statistical equivalence class of a model \citep{Collazo2018}. However, the construction of this equivalence class has been shown to be extremely challenging \citep{duarte2021,Gorgen2018,gorgen2018discovery},  and no practical implementations are available. Here we provide a first graphical criterion to characterize part of the equivalence class of a staged tree by looking only at its symmetric independences and showcase its use in practice in our data applications. Approaches restricting the types of independences to be considered when studying equivalence have lately become popular also for DAG models \citep{markham2022transformational,textor2015,wienobst2020recovering}. Notice that once a causal staged tree model is chosen, there is a wide array of methods to estimate causal effects \citep{Genewein2020,Gorgen2015,Thwaites2010,thwaites2013causal}.

The quality of our routines is investigated by computing a new measure of dissimilarity between causal models tailored to the topology of staged trees and inspired by the widely-used structural intervention distance (SID) \citep{Peters2015} which we henceforth call \emph{context-specific intervention discrepancy} (CID). Differently from SID which only accounts for symmetric causal relationships, our defined CID can more generally consider the difference between two causal models by accounting for complex, asymmetric dependencies.

Summarizing, our 
contributions are the following: (i) a first graphical criterion of equivalence in staged trees; (ii) the first causal measure to compare asymmetric causal relationships in both staged trees and DAGs; (iii) a comparative simulation study highlighting the effectiveness of asymmetric causal discovery; (iv) multiple real-world data applications showcasing our methodology in practice.
The code with the implemented methods and the simulation experiments is available in the \texttt{stagedtrees} R package~\citep{Carli2020} and in the repository available at  \url{https://github.com/gherardovarando/stagedtrees_causal}.

\section{STAGED TREES}

Let  $[p]=\{1,\dots,p\}$ and $\bm{X}=(X_i)_{i\in[p]}$ be categorical random variables with joint mass function $P$ and sample space $\mathbb{X}=\times_{i\in[p]}\mathbb{X}_i$. For $A\subset [p]$, we let $\bm{X}_A=(X_i)_{i\in A}$ and $\bm{x}_A=(x_i)_{i\in A}$ where $\bm{x}_A\in\mathbb{X}_A=\times_{i\in A}\mathbb{X}_i$. We also let $\bm{X}_{-A}=(X_i)_{i\in [p]\setminus A}$.

Let $(V,E)$ be a directed, finite, rooted tree with vertex set $V$, root node $v_0$, and edge set $E$. 
For each $v\in V$, 
let $E(v)=\{(v,w)\in E\}$ be the set of edges emanating
from $v$ and $\mathcal{C}$ be a set of labels. 

\begin{definition}
\label{def:x}
An $\bf X$-compatible staged tree 
is a triple $(V,E,\eta)$, where $(V,E)$ is a rooted directed tree and:
\begin{enumerate}
    \item $V = {v_0} \cup \bigcup_{i \in [p]} \mathbb{X}_{[i]}$;
		\item For all $v,w\in V$,
$(v,w)\in E$ if and only if $w=\bm{x}_{[i]}\in\mathbb{X}_{[i]}$ and 
			$v = \bm{x}_{[i-1]}$, or $v=v_0$ and $w=x_1$ for some
$x_1\in\mathbb{X}_1$;
\item $\eta:E\rightarrow \mathcal{L}=\mathcal{C}\times \cup_{i\in[p]}\mathbb{X}_i$ is a labelling of the edges such that $\eta(v,\bm{x}_{[i]}) = (\kappa(v), x_i)$ for some 
			function $\kappa: V \to \mathcal{C}$. 
\end{enumerate}
	If $\eta(E(v)) = \eta(E(w))$ then $v$ and $w$ are said to be in the same 	\emph{stage}.
\end{definition} 

Therefore, the equivalence classes induced by  $\eta(E(v))$
form a partition of the internal vertices of the tree  in \emph{stages}.

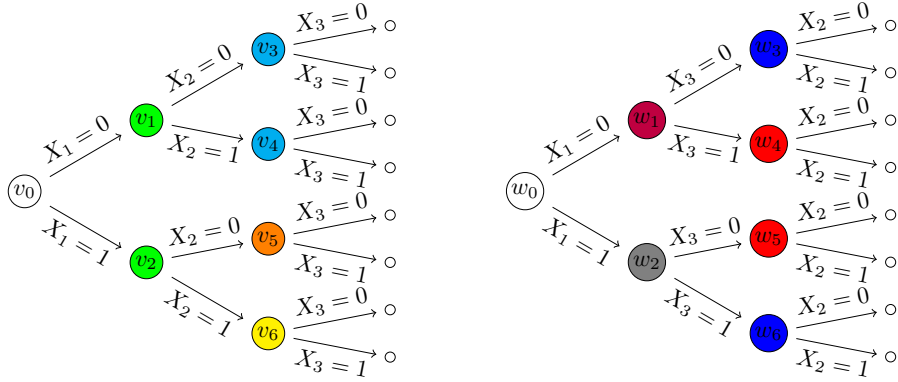
\begin{figure*}
\centering
\scalebox{0.9}{
\begin{tikzpicture}
\renewcommand{\xx}{1.8}
\renewcommand{\yy}{0.7}
\node (v1) at (0*\xx,0*\yy) {\stage{white}{0}{v}};
\node (v2) at (1*\xx,1.5*\yy) {\stage{green}{1}{v}};
\node (v3) at (1*\xx,-1.5*\yy) {\stage{green}{2}{v}};
\node (v4) at (2*\xx,3*\yy) {\stage{cyan}{3}{v}};
\node (v5) at (2*\xx,1*\yy) {\stage{cyan}{4}{v}};
\node (v6) at (2*\xx,-1*\yy) {\stage{orange}{5}{v}};
\node (v7) at (2*\xx,-3*\yy) {\stage{yellow}{6}{v}};
\node (l1) at (3*\xx,3.5*\yy) {\leaf};
\node (l2) at (3*\xx,2.5*\yy) {\leaf};
\node (l3) at (3*\xx,1.5*\yy) {\leaf};
\node (l4) at (3*\xx,0.5*\yy) {\leaf};
\node (l5) at (3*\xx,-0.5*\yy) {\leaf};
\node (l6) at (3*\xx,-1.5*\yy) {\leaf};
\node (l7) at (3*\xx,-2.5*\yy) {\leaf};
\node (l8) at (3*\xx,-3.5*\yy) {\leaf};
\draw[->] (v1) -- node [above, sloped] {$X_1=0$} (v2);
\draw[->] (v1) -- node [below, sloped] {$X_1=1$} (v3);
\draw[->] (v2) --  node [above, sloped] {$X_2=0$} (v4);
\draw[->] (v2) -- node [below, sloped] {$X_2=1$} (v5);
\draw[->] (v3) -- node [above, sloped] {$X_2=0$} (v6);
\draw[->] (v3) -- node [below, sloped] {$X_2=1$} (v7);
\draw[->] (v4) -- node [above, sloped] {$X_3=0$} (l1);
\draw[->] (v4) -- node [below, sloped] {$X_3=1$} (l2);
\draw[->] (v5) -- node [above, sloped] {$X_3=0$} (l3);
\draw[->] (v5) -- node [below, sloped] {$X_3=1$} (l4);
\draw[->] (v6) -- node [above, sloped] {$X_3=0$} (l5);
\draw[->] (v6) -- node [below, sloped] {$X_3=1$} (l6);
\draw[->] (v7) -- node [above, sloped] {$X_3=0$} (l7);
\draw[->] (v7) -- node [below, sloped] {$X_3=1$} (l8);
\end{tikzpicture}}
\hspace{1cm}
\scalebox{0.9}{
\begin{tikzpicture}
\renewcommand{\xx}{1.8}
\renewcommand{\yy}{0.7}
\node (v1) at (0*\xx,0*\yy) {\stage{white}{0}{w}};
\node (v2) at (1*\xx,1.5*\yy) {\stage{purple}{1}{w}};
\node (v3) at (1*\xx,-1.5*\yy) {\stage{gray}{2}{w}};
\node (v4) at (2*\xx,3*\yy) {\stage{blue}{3}{w}};
\node (v5) at (2*\xx,1*\yy) {\stage{red}{4}{w}};
\node (v6) at (2*\xx,-1*\yy) {\stage{red}{5}{w}};
\node (v7) at (2*\xx,-3*\yy) {\stage{blue}{6}{w}};
\node (l1) at (3*\xx,3.5*\yy) {\leaf};
\node (l2) at (3*\xx,2.5*\yy) {\leaf};
\node (l3) at (3*\xx,1.5*\yy) {\leaf};
\node (l4) at (3*\xx,0.5*\yy) {\leaf};
\node (l5) at (3*\xx,-0.5*\yy) {\leaf};
\node (l6) at (3*\xx,-1.5*\yy) {\leaf};
\node (l7) at (3*\xx,-2.5*\yy) {\leaf};
\node (l8) at (3*\xx,-3.5*\yy) {\leaf};
\draw[->] (v1) -- node [above, sloped] {$X_1=0$} (v2);
\draw[->] (v1) -- node [below, sloped] {$X_1=1$} (v3);
\draw[->] (v2) --  node [above, sloped] {$X_3=0$} (v4);
\draw[->] (v2) -- node [below, sloped] {$X_3=1$} (v5);
\draw[->] (v3) -- node [above, sloped] {$X_3=0$} (v6);
\draw[->] (v3) -- node [below, sloped] {$X_3=1$} (v7);
\draw[->] (v4) -- node [above, sloped] {$X_2=0$} (l1);
\draw[->] (v4) -- node [below, sloped] {$X_2=1$} (l2);
\draw[->] (v5) -- node [above, sloped] {$X_2=0$} (l3);
\draw[->] (v5) -- node [below, sloped] {$X_2=1$} (l4);
\draw[->] (v6) -- node [above, sloped] {$X_2=0$} (l5);
\draw[->] (v6) -- node [below, sloped] {$X_2=1$} (l6);
\draw[->] (v7) -- node [above, sloped] {$X_2=0$} (l7);
\draw[->] (v7) -- node [below, sloped] {$X_2=1$} (l8);
\end{tikzpicture}}

\caption{An example of an $(X_1, X_2, X_3)$-compatible (left) 
and an $(X_1, X_3, X_2)$-compatible (right) staged trees.
\label{fig:staged1}}
\end{figure*}

Definition \ref{def:x} first constructs a rooted tree where each root-to-leaf path, or equivalently each leaf, is associated with an element of the sample space $\mathbb{X}$.  Then a labeling of the edges of such a tree is defined where labels are pairs with one element from a set $\mathcal{C}$ and the other from the sample space $\mathbb{X}_i$ of the corresponding variable $X_i$ in the tree. By construction, $\bf X$-compatible staged trees are such that two vertices can be in the same stage if and only if they correspond to the same sample space. Although staged trees can be more generally defined without imposing this condition, henceforth, and as common in practice, we focus on $\bf{X}$-compatible staged trees only \citep[see][for an example of a non $\bf{X}$-compatible tree]{Leonelli2019}. 

Figure~\ref{fig:staged1} (left) reports an $(X_1,X_2,X_3)$-compatible stratified staged tree over three binary variables. The \textit{coloring} given by the function $\kappa$ is shown in the vertices and
each edge $(\cdot , (x_1, \ldots, x_{i}))$ is labeled with $X_i = x_{i}$. 
The edge labeling $\eta$ can be read from the graph combining the text label and the 
color of the emanating vertex. 
The staging of the staged tree in Figure~\ref{fig:staged1} is given by the partition $\{v_0\}$, $\{v_1,v_2\}$, $\{v_3,v_4\}$, $\{v_5\}$ and $\{v_6\}$.

The parameter space associated to an $\bf X$-compatible staged tree $T = (V, E, \eta)$ 
with 
labeling $\eta:E\rightarrow \mathcal{L}$ 
is defined as
\begin{align*}
\label{eq:parameter}
	\Theta_T=\Big\{\bm{\theta}\in\mathbb{R}^{\eta(E)} \;|\; \forall e\in E, \theta_{\eta(e)}\in (0,1) \textnormal{ and } \\
	\forall v\in V, \sum_{e\in E(v)}\theta_{\eta(e)}=1\Big\}
\end{align*}

Let $\bm{l}_{T}$ denote the leaves of a staged tree $T$. Given a vertex $v\in V$, there is a unique path in $T$ from the root $v_0$ to $v$, denoted as $\lambda(v)$. For any path $\lambda$ in $T$, let $E(\lambda)=\{e\in E: e\in \lambda\}$ denote the set of edges in the path $\lambda$.


\begin{definition}
\label{def:stmodel}
	The \emph{staged tree model} $\mathcal{M}_{T}$ associated to the $\bf X$-compatible staged 
	tree $(V,E,\eta)$ is the image of the map
\begin{equation}
\label{eq:model}
\begin{array}{llll}
\phi_T & : &\Theta_T &\to \Delta_{|\bm{l}_T| - 1} \\
 &  & \bm{\theta} &\mapsto \Big(\prod_{e\in E(\lambda(l))}\theta_{\eta(e)}\Big)_{l\in \bm{l}_T}
\end{array}
\end{equation}
\end{definition}

An element of $\mathcal{M}_T$ in Definition~\ref{def:stmodel} identifies a joint probability $P_{\bm{\theta}}$ with conditional distributions, for all $\bm{x}\in\mathbb{X}$ and  $i\in[p]$,
\[ 
 P_{\bm{\theta}}(X_i = x_i | X_{[i-1]} = \bm{x}_{[i-1]}) = \theta_{\eta( \bm{x}_{[i-1]}, \bm{x}_{[i]})}. 
 \] 
 
 \begin{definition}
Two staged trees $T$ and $S$ are said to be \emph{statistically equivalent}
if they induce the same 
models, that is $\mathcal{M}_T = \mathcal{M}_S$. 
\end{definition}

\subsection{Conditional Independence and graphical representation}

A symmetric or total, conditional independence statement, or just conditional independence (CI) 
($X_A \independent X_B | X_C$) holds for a 
probability distribution $P$, over categorical 
random variable, if 
\begin{equation}
    P(X_A| X_B = x_B, X_C = x_C) =  P(X_A | X_C = x_c),
    \label{eq:CI}
\end{equation}
for every $x_B \in \mathbb{X}_B$ and $x_c \in \mathbb{X}_C$.
Conditional independence statements can
be efficiently represented by DAGs model and 
the d-separation criterion~\citep{pearl1987logic, VERMA199069}.
In particular if a probability distribution $P$ over $X_1, 
\ldots, X_p$ belongs to 
the model class associated with a DAG $G$, then 
$X_i \independent X_j | X_C$ with respect to $P$ if 
$i$ and $j$ are d-separated by $C$ in $G$. Thus we can 
graphically read in $G$ the conditional independence statements
that hold for every distribution Markov with respect to $G$.

For categorical random variables, we could envisage those equality relationships such as 
Equation~\ref{eq:CI} hold 
only for a subset of values of $x_B$ and/or 
$x_C$. This generalized \textit{asymmetric} 
conditional independences
can be organized into three classes:
(i) Context-specific CI~\citep{Boutilier1996}, when 
    $P(X_A | X_B = x_B, X_C=x_C) = 
    P(X_A | X_C=x_c)$ for all $x_B 
    \in \mathbb{X}_B$ and 
    for a subset of possible value $x_C \in \mathcal{C} \subseteq \mathbb{X}_C$ (the context).
    (ii) Partial CI~\citep{Pensar2016}, when 
    $P(X_A |  X_B = x_B, X_C=x_C ) = 
    P(X_A | X_C=x_C)$
    for a subset of values $x_B \in \mathcal{B} \subseteq \mathbb{X}_B$ and a subset of 
    values $x_C \in \mathcal{C} \subseteq \mathbb{X}_C$.
    (iii) Local CI~\citep{Chickering1997}, when
    $P(X_A|  X_C = x^1_C) = 
    P(X_A | X_C = x^2_C)$.
Such asymmetric CI statements cannot be encoded 
graphically in a classical DAG model, since they refer to
equalities valid in specific conditional probability tables. 
Previous works have thus modeled such equality relationships by either modeling CPTs with tree structures~\citep{Chickering1997, Boutilier1996, Pensar2016}
or by considering only context-specific CIs and using 
labelled DAGs~\citep{Pensar2015}.

In the staged tree in Figure~\ref{fig:staged1} (left), 
we can see how the vertex staging represents conditional independence: the fact that $v_1$ and $v_2$ are in the same stage (green) implies that $X_1\independent X_2$;
in fact, from the definition of staged tree model, the
context-specific conditional distribution of $X_2$ given $X_1=0$, represented by the edges emanating from $v_1$, is equal to the conditional distribution of $X_2$ given $X_1=1$.  The staging given by the light-blue vertices implies instead the context-specific independence~\citep{Boutilier1996} $X_3\independent X_2|X_1=0$: the independence between $X_3$ and $X_2$ holds only for one of the two levels of $X_1$. For such a staged tree there is no equivalent DAG representation since it embeds non-symmetric conditional independences~\citep{Varando2021}.

In the staged tree in Figure~\ref{fig:staged1} (right), we can
observe that the stages structure for the last variable 
implies the following equalities: 
$P(X_2| X_1 = 0,\, X_3=0) = P(X_2| X_1 = 1,\, X_3=1)$
and $P(X_2| X_1 = 0,\, X_3=1) = P(X_2| X_1 = 1,\, X_3=0)$. 
This is what is defined as a local CI~\citep{Chickering1997},
it is a relationship between conditional probabilities which 
cannot be expressed as traditional conditional independence nor 
context-specific or partial. 
We refer to \cite{Pensar2016} and \cite{Varando2021} for  additional discussion and examples of 
asymmetric CIs.

\subsection{Staged Trees and DAGs}
Consider a DAG $G$ and the associated statistical model $\mathcal{M}_G$ of all distributions that are Markov to $G$. \citet{Smith2008} showed that one can always construct a staged tree $T_G$ such that $\mathcal{M}_G=\mathcal{M}_{T_G}$. 
However, given a staged tree $T$ in general one cannot find a DAG $G_T$ such that $\mathcal{M}_T=\mathcal{M}_{G_T}$ since staged trees embed asymmetric independences that DAGs cannot represent. 

\citet{Varando2021} demonstrated that it is possible to find a minimal DAG $G_T=([p],F)$ such that 
$\mathcal{M}_T\subseteq\mathcal{M}_{G_T}$, and this minimal $G_T$ represents all symmetric conditional independences of $T$. More formally,  $X_i \independent X_i|\bm{X}_C$ holds in $\mathcal{M}_T$ if and only if $i$ and $j$ are d-separated by $C$ in $G_T$. For instance, the minimal DAG representation of the staged tree in 
Figure~\ref{fig:staged1} (left) is the v-structure $1\rightarrow 3 \leftarrow 2$.

In \citet{Varando2021} DAGs $G_T$ are also extended to have a labeling of their edges according to the type of dependence existing between any pair of random variables in the underlying staged tree $T$ \citep[according to the categorization of asymmetric independence given in][]{Pensar2016}. They termed such labeled DAGs as asymmetry-labeled DAGs (ALDAGs) and introduce algorithms to learn them from data. For the purposes of this paper, we are interested in a simplified version 
of the labeling; we are, in particular,
interested only in identify 
the subset of edges in $G_T$ which cover some 
asymmetric conditional independence statements. 

\begin{definition}
Let $G_T$ be the minimal DAG of an $\bm{X}$-compatible staged tree $T=(V,E,\eta)$ and 
an edge $(i, j)$ of $G$ is 
called \textit{non-total} if
\begin{align*}
\eta(E(\bm{x}_{[j-1]})) = \eta(E(\bm{x}'_{[j-1]})),\\
\text{for } \bm{x}_{[j-1]}, \bm{x}'_{[j-1]} \in 
\mathbb{X}_{[j-1]},  \\ 
\,\text{s.t.}\, 
{x}_i \neq x'_i.
\end{align*}
An edge is called \textit{total} otherwise. 
\end{definition}

Intuitively an edge $(i,j)$ in a minimal DAG is non-total 
when the variable $X_i$  is ``involved'' is a non-symmetrical conditional
independence for $X_j$.
The edges of a minimal DAG $G_T=([p],F)$ can thus be partitioned in total ($F^{tot}$)
and non-total edges ($F^{nt}$). 

As an illustration, in the 
minimal DAG from the tree on the left of 
Figure~\ref{fig:staged1} ($1\rightarrow 3 \leftarrow 2$), the edge $(2,3)$ is non-total because  $v_3$ and $v_4$ belong 
to the same stage.

\subsection{Causal Models Based on Staged Trees}

We can define a 
finite-interventional causal model~\citep{rischel21a}
induced by a staged tree 
as a collection of interventional
distributions in an intuitive way:  the joint distribution of $\bm{x}$ 
in the intervened model is obtained by the product of the parameters in the corresponding root-to-leaf path where we replace parameters corresponding to intervened variables.

\begin{definition}
    \label{def:causalstagedtree} 
    A staged tree causal model  
    induced by an $\mathbf{X}$-compatible staged tree 
    $T = (V, E, \eta)$ is the class of interventional 
    distributions defined, for each parameter vector
    $\bm{\theta} \in \Theta_T$, as follows:
    \begin{equation}
    \small
    \begin{aligned}
    &P_{\bm{\theta}}\left(\bm{X} = \bm{x} | \Do( \bm{X}_I = \bm{z}_I) \right) = 
    \prod_{i \not\in I} \theta_{\eta(\bm{x}_{[i]}, \bm{x}_{[i-1]})}  \prod_{i \in I} \delta(x_k, z_k) \\ 
    &= 
    \left\{ 
    \begin{matrix} 
    \frac{ P_{\bm{\theta}}( \bm{X} = \bm{x} ) }{\prod_{i \in I} 
    P_{\bm{\theta}}(X_i = x_i | \bm{X}_{[i-1]} = \bm{x}_{[i-1]}) }  & \text{if } \bm{x}_I = \bm{z}_I\\ 
    0   & \text{otherwise}
    \end{matrix} \right.
    \end{aligned}
    \end{equation}
\end{definition}

In particular, under the empty intervention, we recover the observational distribution $P_{\bm{\theta}}$. 

We say that two staged trees $T$ and $S$ are causally equivalent if 
they induce the same class of interventional distributions as in Definition~\ref{def:causalstagedtree}.
Obviously, two causally equivalent staged trees are also statistically equivalent but
not vice versa. 

We have that, as for DAGs, intervening on some variables, only affects
downstream variables, for $i \not\in I$ 
and an $\bm{X}$-compatible staged tree:
\begin{equation}
\begin{aligned}
P_{\bm{\theta}}\left(X_i | \Do(\bm{X}_I = \bm{x}_I)\right) = 
P_{\bm{\theta}}\left(X_i | \Do(\bm{X}_{I^*} = \bm{x}_{I^*}) \right),\\
\text{where } I^* = I \cap [i-1]
\end{aligned}
\end{equation}

And, in particular,
\begin{align*}
P_{\bm{\theta}}\left( X_i = x_i | \Do(\bm{X}_{[i-1]} = \bm{x}_{[i-1]}) \right) &=  \\
P_{\bm{\theta}}\left( X_i = x_i | \bm{X}_{[i-1]} = \bm{x}_{[i-1]} \right) &= \theta_{\eta( \bm{x}_{[i-1]}, \bm{x}_{[i]})}.
\end{align*}

\section{CAUSAL DISCOVERY ALGORITHMS}
\label{sec:methods}

As discussed by~\cite{Collazo2018} and~\cite{Cowell2014}, causal discovery algorithms for staged trees must combine two routines: (i) an algorithm learning the stage structure of the tree with a fixed variable ordering; (ii) an algorithm exploring the possible variable orderings. Both are reviewed next.

\subsection{Learning the Stage Structure with a Fixed Order}
\label{sec:stages}
The space of possible $\bm{X}$-compatible staged trees is considerably larger than the space of possible DAGs. 
Even if we fix the order of the variables, exploring all possible combinations of 
stages structure becomes rapidly infeasible~\citep{Collazo2018}. 
We thus use two of the possible heuristic 
searches implemented in the 
\texttt{stagedtrees} package~\citep{Carli2020}. In both cases, we use the BIC score as criterion for selecting the 
best ordering of the variables~\citep[see][for details]{Gorgen2020}.  However, our implementation can be coupled with any algorithm available in the \texttt{stagedtrees} package. 

The backward hill-climbing (BHC) method 
consists in
starting from the saturated model and, 
for each variable, iteratively 
trying to join stages. 
At each step of the algorithm, all possible combinations of two stages are tried and the best move is chosen. 
Since the log-likelihood decomposes across the depth of the tree, the stages search can be performed  independently for each variable. 

The use of the k-means clustering of probabilities to learn staged event tree was first introduced by~\citet{Silander2013} as a fast alternative to 
backward hill-climbing algorithms. 
The default version implemented in the \texttt{stagedtrees} package performs k-means clustering over the square root of the probabilities of a given variable given all the possible contexts. 
Both algorithms operate over the stage structures of each variable $X_i$ independently of the other 
variable stages. Furthermore, the estimated stage structure of a given variable depends only on which variables precede $X_i$, independently of their order.

\subsection{Learning an Optimal Variable Ordering}
\label{sec:order}
The methods described in the previous section
output an $\bm{X}_{\pi}$-compatible staged tree for a possible ordering $\pi$ of the variables.  
For a small number of variables, it is possible to 
simply enumerate all possible $p!$ staged  trees for all possible orders, and select the best one(s) according to a chosen criterion  (e.g. BIC). \citet{Silander2013} proposed a dynamic 
programming algorithm that still obtains a 
global optimum, but with a substantial 
reduction in computational complexity. 
The method can be coupled with 
every algorithm which operates independently on
every variable and using as guiding score 
any function which can be decomposed 
across the variables of the model.

\subsection{Related Work}

In principle, non-symmetric CI statements are  
represented by equalities in conditional probability tables (CPT) in categorical DAG parametrizations. 
Still, a classical 
(or full tables) DAG is not able to represent 
\textit{graphically} such asymmetric relationships, in the sense that such equalities
are not encoded in any particular structure. 
A simple extension of DAGs could consider
additional nodes representing 
values of variables (e.g. $X_1 = 0$, $X_1 = 1$) in 
order to represent context-specific relationships. 
Unfortunately, this strategy
 works only for univariate contexts and, 
 it would entail
deterministic relationships between some nodes in the DAG (e.g. $X_1=0$ and $X_1 = 1$).

More complex strategies for non-symmetric relationships in DAGs have been proposed in the literature. Two main
approaches are modeling CPTs with tree structures~\citep{Chickering1997, Boutilier1996, Pensar2016}
or use labelled graphs~\citep{Pensar2015} for context-sepcific independences. 
DAGs with tree-parametrized CPTs and staged tree methods are very similar approaches that use trees
to represent conditional probabilities. 
In particular, the statistical models represented by 
staged tree and DAGs with tree-CPTs are, in principle, 
equivalent. 
Even the learning algorithm proposed by \citet{Pensar2016} 
consist in a heuristic search using splitting and joining 
operation on each CPT-tree, similar in a way to the 
hill-climbing moves proposed also for staged trees~\citep{Carli2020}.
The difference is that, in the
staged tree approach, we do not assume a sparse DAG between variables and we do not search both a DAG and sparse
CPTs. 
Of course, restricting to sparse DAG is beneficial from a computational perspective, and it has been proposed and
shown to be effective also for staged trees~\citep{Barclay2013, pmlr-v186-leonelli22a}. Unfortunately, we are not aware of any available implementation of these
related methods and we were thus 
unable to run any empirical comparisons.

\subsection{Exploring the Equivalence Class}

Given a learned staged tree from data, any formal causal analysis also needs exploration of the associated statistical equivalence class. For staged trees, this has been shown to be extremely complex. \citet{Gorgen2018} and \citet{gorgen2018discovery} give polynomial criteria which are complex to implement in practice, whilst~\citet{duarte2021} considers a particular subclass of staged trees. The following proposition paves the way toward the exploration of the equivalence class of a staged tree.

\begin{proposition}
\label{prop:equivalence}
Let $T$ be an $\bm{X}$-compatible staged tree and 
$G_T=([p], 
F=F^{tot} \cup F^{nt})$ 
its minimal DAG,
where we denote with $F^{nt}$ the non-total edges of $G_T$. 
Let $G'=([p], F' \cup F^{nt})$ be a DAG in the 
same Markov equivalence class of $G_T$, where $\pi$ is one of its topological orders.
If additionally $([p], F')$ and $([p], F^{tot})$ are 
Markov equivalent,
then there exists an $\bm{X}_{\pi}$-compatible staged tree $S$ (with minimal DAG $G'$)
such that $\mathcal{M}_T=\mathcal{M}_{S}$.
Vice versa, 
If $\mathcal{M}_T=\mathcal{M}_{S}$
their minimal DAGs $G_T$ and $G_{S}$ are Markov equivalent.
\end{proposition}

However, 
there may be equivalent staged trees whose 
minimal DAGs
have ``non-total" edges with a different directionality~\citep[see e.g.][]{Pensar2015}. 
As an illustration of Proposition~\ref{prop:equivalence}, consider the staged tree in Figure~\ref{fig:staged1} (left). Its 
minimal DAG
is the v-structure $X_1\rightarrow X_3 \leftarrow X_2$. 
Therefore there exists, at least, an 
$(X_2,X_1,X_3)$-compatible staged tree which is statistically equivalent to the one in 
Figure~\ref{fig:staged1} (left), and  there cannot be a statistically equivalent staged tree where $X_3$ is not the last variable.

Although Proposition \ref{prop:equivalence} does not give a complete characterization of the equivalence class, it is important because it informs about relationships existing in the staged tree which cannot be interpreted as causal if learned from data. We showcase in Section \ref{sec:data} how the proposition can be used for applied causal analyses.

\section{CONTEXT-SPECIFIC INTERVENTIONAL DISCREPANCY}

Similarly to the structural interventional distance (SID) for DAGs~\citep{Peters2015}, 
which counts the number of wrongly estimated interventional distributions, we can define a context interventional discrepancy, with respect to 
a reference staged tree ($T$).
Such a discrepancy measures the extent of the errors done in 
computing context-specific interventions using 
a different staged tree ($S$).

\begin{definition}
\label{def:cid}
Let $T = (V, E, \eta)$ be an $\bm{X}$-compatible staged event tree
and $S = (W, F, \nu)$ an $\bm{X}_{\pi}$-compatible staged event tree,
where $\pi$ is a permutation 
of  $[p]$. 
We define the context interventional discrepancy $\CID(T, S)$ as,
\[ \CID(T, S) =  \sum_{i \in [p]} \CID_i(T, S), \] 
where $\CID_i\left(T,S\right)$ is the
proportion of contexts $\bm{x}_{[i-1]} \in \mathbb{X}_{[i-1]}$
for which the
interventional distribution 
$P(X_{i} | \Do(\bm{X}_{[i-1]} = \bm{x}_{[i - 1]}))$
is wrongly inferred by $S$ with respect to $T$.
Precisely, we say that $P(X_{i} | \Do(\bm{X}_{[i-1]} = \bm{x}_{[i-1]}))$ 
is wrongly inferred by $S$ with respect to $T$ if 
there exists $P \in \mathcal{M}_T$ such that 
\begin{multline*}  
 P(X_{i} | \bm{X}_{[i-1]} = \bm{x}_{[i - 1]}) \neq  \\
  P\left( X_i| \bm{X}_{I} \in  \left\{ 
  \begin{matrix}
   \bm{y}_{I} \in \mathbb{X}_{I} :
  \nu(E(\bm{y}_{K})) = \nu(E(\bm{x}_K))  \\
  \text{ for some } \bm{y}_{K \setminus I}, 
  \bm{x}_{K \setminus I} \in \mathbb{X}_{K\setminus I}
  \end{matrix}
 \right\} \right)  
\end{multline*}
where 
$K = \{ j : \pi^{-1}(j) < \pi^{-1}(t) \}$ are the variables preceding $X_i$ in $S$ and 
$I = K \cap [i-1]$.
\end{definition}


Intuitively, CID measures how much a different staged tree $S$ can be used to compute 
interventional distributions of the type 
$P(X_{i} | \Do(\bm{X}_{[i-1]} = \bm{x}_{[i - 1]}))$. 
We choose to consider only univariate 
distributions under interventions on the 
preceding variables in the true model $T$, instead of pairwise interventional distributions of the form $P(X_i | \Do(X_j = x_j))$ as in the definition of SID~\citep{Peters2015}. This is because our goal is
to quantify the effect of context-specific interventions. 
Moreover, differently from SID, we do not control 
if the additional variables preceding $X_i$ in $S$ 
are a valid adjustment set. This simplification
is taken to avoid the computational complexity of having 
to check the stages structure for all the variables in the 
eventual adjustment set. 
Nevertheless, the proposed CID is a sensitive measure of
correctness of the causal model, as Proposition~\ref{prop:cid} 
and Figure~\ref{fig:density} show.
However, other measures of the differences 
between staged tree causal models could be alternatively defined, eventually considering different intervention classes or
validity of the adjustment sets. 

The algorithm to compute the context-specific 
interventional discrepancy
is given in the Appendix where its correctness is also proven.

As an example of the computation of CID$(T,S)$, consider the two staged trees in Figure~\ref{fig:staged1}, 
where the left tree is $T$ and the right one is $S$. We need to determine which interventional distributions for the left staged tree are wrongly inferred by the right one.
For example, consider the intervention $\Do(X_1 = 0 , X_2  = 1)$ and the distribution 
$P(X_3|\Do(X_1 = 0 , X_2  = 1))$ for $P \in \mathcal{M}_T$. We have that, 
\[ P(X_3|\Do(X_1 = 0 , X_2  = 1))  = P(X_3| X_1 = 0 , X_2  = 1 ),\]
and, $I = \{1\}$, thus, because $v_3$ and $v_4$ belong to the same stage in $T$:
\begin{align*}
 P(X_3| X_1 = 0) &= P(X_3|X_1 = 0, X_2 = 0)\\
 &=  P(X_3|X_1 = 0, X_2 = 1),
 \end{align*}
and $P(X_3|\Do(X_1 = 0 , X_2  = 1))$ is then correctly inferred by $S$.
On the other hand, we have that $P(X_3|\Do(X_1 = 1 , X_2  = 1))$ is wrongly inferred 
by $S$ because $P(X_3,|X_1 = 1 , X_2  = 1) \neq  P(X_3 | X_1 = 1)$ in general. To see this, notice that 
\begin{equation*}
\begin{aligned}
     P(&X_3 | X_1 = 1) = \\ 
     &\sum_{x_2 = 0,1}
        P(X_3 | X_1 = 1, X_2 = x_2)P(X_2 = x_2| X_1 = 1).
\end{aligned}
\end{equation*} 
And $P(X_3,|X_1 = 1 , X_2  = 1) \neq  P(X_3 | X_1 = 1)$ if, for example, we choose $P(X_2 = x_2| X_1 = 1) = 0.5$ and 
$P(X_3|X_1 = 1, X_2 = 0) \neq  P(X_3|X_1 = 1, X_2 = 1)$.

As another example, consider the interventional distribution
$P(X_2| \Do(X_1 = 1) ) $. In this case we have $I=\{1\}$, and
since vertices $v_1, v_2$  are in the same stage (and so are $\{ w_5, w_4\}$ and 
$\{ w_3, w_6\}$), we  have that 
\[P(X_2| X_1 = 1 )  = P(X_2) =  P(X_2 | X_1 \in \{ 0, 1 \} ), \]
that is, $S$ correctly infers the interventional distribution $P(X_2| \Do(X_1 = 1))$
for every $P \in \mathcal{M}_T$.

Similar to SID, the context-specific intervention discrepancy 
is not symmetric. The following proposition 
collects some properties of the newly defined 
measure. 

\begin{proposition}
\label{prop:cid}
The following properties hold for $\CID$. 
\begin{enumerate}
\renewcommand{\theenumi}{\roman{enumi}}
    \item $\CID(T,S) = 0$ for every pair of causally equivalent staged trees $S,T$.
    \item If $\mathcal{M}(T) \subseteq \mathcal{M}(S)$ and $\pi$ is the identity, then $\CID(T,S) = 0$.
    \item If $\mathcal{M}_T$ is the full independence model then $CID(T, S) = 0$ 
    for every $X_{\pi}$-compatible staged tree $S$.
\end{enumerate}
\end{proposition}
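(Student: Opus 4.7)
The plan is to interpret the right-hand side of the inequality in Definition~\ref{def:cid} operationally as $S$'s inference of the interventional distribution: in $S$ the variables in $K$ are exactly those preceding $X_i$, and two contexts $\bm{y}_K,\bm{x}_K$ are indistinguishable to $S$ precisely when their $S$-stages coincide, so conditioning on $\bm{X}_I \in \{\bm{y}_I : \nu(\bm{y}_K,w)=\nu(\bm{x}_K,u)\}$ captures how $S$ would compute $P(X_i | \Do(\bm{X}_{[i-1]}=\bm{x}_{[i-1]}))$. The left-hand side, by the $\Do$-identity already derived, is the same interventional distribution computed in $T$. With this identification, the four claims reduce to routine verifications.

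Parts (ii) and (i) then follow almost immediately. For (ii), $\CID(T,S)=0$ is by definition the equality of the two sides for every context and every $P \in \mathcal{M}_T$, which is exactly the statement that $S$ correctly computes every $P(X_i | \Do(\bm{X}_{[i-1]}=\bm{x}_{[i-1]}))$. For (i), causal equivalence forces $\mathcal{M}_T = \mathcal{M}_S$ together with equality of interventional distributions on this common class; since the two sides of the inequality are precisely the $T$- and $S$-computations of that interventional distribution, they coincide and no context contributes to $\CID$.

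For (iii), taking $\pi$ to be the identity yields $J=\emptyset$ and $I=K=[i-1]$, so the conditioning set on the right collapses to those $\bm{y}_{[i-1]}$ lying in the same $S$-stage as $\bm{x}_{[i-1]}$. Because $P \in \mathcal{M}_T \subseteq \mathcal{M}_S$, the staged tree model $S$ (via Definition~\ref{def:stmodel}) forces $P(X_i=x_i | \bm{X}_{[i-1]}=\bm{y}_{[i-1]})$ to be constant across this set; averaging a constant returns the same constant, so both sides of the inequality agree. For (iv), full independence in $T$ gives $P(X_i | \bm{X}_{[i-1]}=\bm{x}_{[i-1]}) = P(X_i)$ and likewise $P(X_i | \bm{X}_I \in A) = P(X_i)$ for any event $A$ with positive probability, so both sides equal $P(X_i)$ whatever staging or permutation $S$ carries.

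The main obstacle is purely interpretive: one must unpack the somewhat opaque notation on the right-hand side of the inequality and recognize it as averaging the true $T$-conditional over exactly those contexts that $S$ cannot separate with its staging. Once that identification is made, (i), (ii), and (iv) are essentially bookkeeping, and only (iii) genuinely exploits the inclusion $\mathcal{M}_T \subseteq \mathcal{M}_S$ to guarantee that the averaging is trivial.
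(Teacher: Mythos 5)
Your proof is correct and takes essentially the same route as the paper: the key step in both is the identification of the right-hand side of Definition~\ref{def:cid} as $S$'s computation of the interventional distribution, after which (i) follows from causal equivalence, (ii) is immediate from the definition, and (iii) uses that under $\mathcal{M}_T \subseteq \mathcal{M}_S$ the conditional of $X_i$ is constant on each $S$-stage so the averaging is trivial. The only (immaterial) difference is part (iv), which you verify directly from the independence of $X_i$ and $\bm{X}_{[i-1]}$ under the fully independent model, whereas the paper reduces it to (i) and (iii) via causal equivalence with the fully independent $\bm{X}_\pi$-compatible tree; both arguments are equally valid.
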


Notice that CID can also be used to compare categorical causal DAGs, since we can always transform a DAG $G$ to its equivalent staged tree representation $T_G$.
In order to compare CID and SID we perform a simulation study where we sample uniformly DAGs over $5$ binary variables and compute their CID and SID. 
The results are reported in the two-dimensional density and scatter plot in Figure~\ref{fig:density}. We can see that there is a high correlation between the two measures thus highlighting that CID is a sensible measure that could be used not only for non-symmetric models but also for symmetric ones based on DAGs.

\begin{figure}
    \centering
    \includegraphics[width=0.48\textwidth]{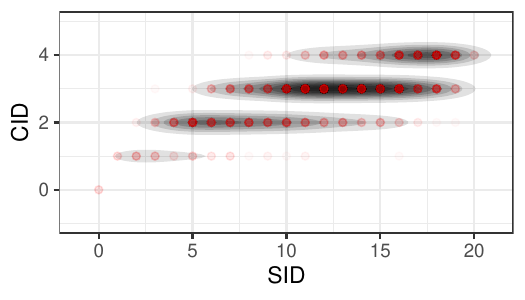}
    \caption{CID and SID between randomly generated DAGs over $5$ binary variables. The correlation between CID and SID is $0.67$ ($95\%$ confidence interval: $(0.633, 0.701)$)}.
    \label{fig:density}
\end{figure}

\section{SIMULATION EXPERIMENTS}
\label{sec:simulation}
We perform a simulation study to evaluate the feasibility of the proposed 
approach and to demonstrate its superiority with respect to the classical DAG algorithms under the assumption that the true model is a staged tree.  We simulate data from randomly generated staged tree models with different degrees of complexity: number of stages per variable ($k \in \{2,3,4\}$).
We consider models with $3, \ldots, 6$ binary variables, and sample sizes ranging from 
$100$ to $10000$ observations. 
For each parameters' combination, we perform $100$ repetition of the 
experiment each time randomly shuffling the order of the variables to 
eliminate any possible bias of the search heuristics. 

\begin{figure}
    \centering
    \includegraphics{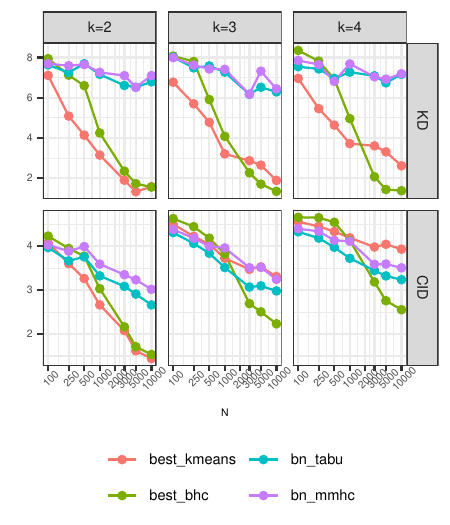}
    \caption{Context interventional discrepancy (\texttt{CID}) and Kendall tau distance (\texttt{KD}) between the estimated and true
    model.}
    \label{fig:exp}
\end{figure}

We run the staged trees approach described in Section~\ref{sec:methods} 
using the backward hill-climbing search (\texttt{best\_bhc}) and the
k-means heuristic (\texttt{best\_kmeans}) with the number of clusters 
fixed to $2$.  Our two routines are compared to two classical DAG learning algorithms 
such as tabu search (\texttt{tabu})~\citep{russell2009artificial}
and max-min hill-climbing (\texttt{mmhc})~\citep{tsamardinos2006max},
both implemented in the \texttt{bnlearn} R package~\citep{Scutari2010}. Results are displayed in Figure~\ref{fig:exp}, 
where the average CID and Kendall tau distance between the variable orderings are plotted as a function
of the sample size $N$ for the case of $6$ variables. The Kendall tau distance is computed between the true 
causal order and the estimated order with the implementation 
in the \texttt{PerMallows} R package~\citep{permallows}. 


We can observe that, as expected, methods based on staged trees are able to better recover 
 the causal structure of the true model. Since the true models are randomly generated staged trees we can expect that algorithms which search for the best DAG 
 are not able to recover the true relationships between the variables. 
  More specifically, we observe that the method based on the k-means algorithm works very well when the number of stages per variable matches the number of clusters ($k = 2$), while, with respect to CID, its performance degrades for $k = 3,4$. 
  The backward hill-climbing method, instead, requires a bigger sample size, but 
  it is able to perform well even when the true model is more complex. Additionally, it is interesting to notice that both staged tree approaches are able 
to recover well the causal order of the variables in all considered scenarios. This is especially interesting for the k-means algorithm which performs better than the 
backward hill-climbing method with respect to the Kendall distance even when it is misspecified ($k>2$). In the Supplementary Materials, we report the computational times of the algorithms. As expected, algorithms for DAGs are faster since the searched model space is much smaller. The k-means algorithm for staged trees is comparable to those for DAGs in terms of speed and its complexity does not seem to exponentially increase as in the case of the backward hill-climbing.

\section{REAL WORLD EXAMPLES}
\label{sec:data}

\subsection{ISTAT: Aspects on Everyday Life}
We illustrate the use of staged trees to uncover causal relationships using data from the 2014 survey ``\textit{Aspects on everyday life}" collected by ISTAT (the Italian National Institute of Statistics)~\citep{ISTAT}. The survey collects information from the Italian population on a variety of aspects of their daily lives. For the purpose of this analysis, we consider five of the many questions asked in the survey: do you practice sports regularly? (S = yes/no); do you have friends you can count on? (F = yes/no); do you trust people? (P = yes/no); 
are you satisfied with the 
environment situation of the area you live in? (E = yes/no, grouped from the original four levels);
do you watch TV? (T = yes/no, grouped from the original three levels). Instances with missing answers were dropped, resulting in 35870 answers to the survey.

We learn the 
staged structure 
and the variable ordering 
with the BHC algorithm 
coupled with the 
dynamical programming approach 
discussed in Section~\ref{sec:methods}.
The resulting 
tree is depicted in Figure~\ref{fig:staged3} where we can observe 
that the stages structure for the first three variables is 
equivalent to the conditional independence statement $S \independent E | F$.

Using Proposition~\ref{prop:equivalence} and
from 
the Markov equivalence class of the minimal DAG, 
and in particular of the sub-DAG S$\leftarrow$ F$\rightarrow$E,
it is 
easy to obtain that there are four equivalent orders of the first three variables (S-F-E;E-F-S;F-S-E;F-E-S) that give rise to statistically equivalent staged trees.

While the causal order among F,E and S cannot 
be completely recovered from data, 
the asymmetrical relationship 
in the stages structure of the last two 
variables (P and T in Figure~\ref{fig:staged3})
imply that there are no statistically 
equivalent staged trees where P or T 
appear before F,E or S~\citep[see][for a similar observation]{Gorgen2018}. 
Therefore the data support the hypothesis that 
F, E, and S affect whether an individual trusts people. 
Similarly, all previous variables appear to have a causal effect on whether an individual watches TV.

\begin{figure}
\centering
\scalebox{0.65}{
\begin{tikzpicture}
\renewcommand{\xx}{2}
\renewcommand{\yy}{1.9}
\node (v1) at (0*\xx,0*\yy) {\stages{white}{0}};
\node (v2) at (1*\xx,2*\yy) {\stages{pink}{1}};
\node (v3) at (1*\xx,-2*\yy) {\stages{purple}{2}};
\node (v4) at (2*\xx,3*\yy) {\stages{brown}{3}};
\node (v5) at (2*\xx,1*\yy) {\stages{brown}{4}};
\node (v6) at (2*\xx,-1*\yy) {\stages{gray}{5}};
\node (v7) at (2*\xx,-3*\yy) {\stages{gray}{6}};
\node (l1) at (3*\xx,3.5*\yy) {\stages{cyan}{7}};
\node (l2) at (3*\xx,2.5*\yy) {\stages{blue}{8}};
\node (l3) at (3*\xx,1.5*\yy) {\stages{blue}{9}};
\node (l4) at (3*\xx,0.5*\yy) {\stages{orange}{10}};
\node (l5) at (3*\xx,-0.5*\yy) {\stages{orange}{11}};
\node (l6) at (3*\xx,-1.5*\yy) {\stages{green}{12}};
\node (l7) at (3*\xx,-2.5*\yy) {\stages{orange}{13}};
\node (l8) at (3*\xx,-3.5*\yy) {\stages{violet}{14}};
\node (l9) at (4*\xx,3.75*\yy){\stages{brown}{15}};
\node (l10) at (4*\xx,3.25*\yy){\stages{yellow}{16}};
\node (l11) at (4*\xx,2.75*\yy){\stages{yellow}{17}};
\node (l12) at (4*\xx,2.25*\yy){\stages{cyan}{18}};
\node (l13) at (4*\xx,1.75*\yy){\stages{red}{19}};
\node (l14) at (4*\xx,1.25*\yy){\stages{red}{20}};
\node (l15) at (4*\xx,0.75*\yy){\stages{brown}{21}};
\node (l16) at (4*\xx,0.25*\yy){\stages{yellow}{22}};
\node (l17) at (4*\xx,-0.25*\yy){\stages{brown}{23}};
\node (l18) at (4*\xx,-0.75*\yy){\stages{brown}{24}};
\node (l19) at (4*\xx,-1.25*\yy){\stages{brown}{25}};
\node (l20) at (4*\xx,-1.75*\yy){\stages{yellow}{26}};
\node (l21) at (4*\xx,-2.25*\yy){\stages{red}{27}};
\node (l22) at (4*\xx,-2.75*\yy){\stages{red}{28}};
\node (l23) at (4*\xx,-3.25*\yy){\stages{red}{29}};
\node (l24) at (4*\xx,-3.75*\yy){\stages{brown}{30}};
\node (l25) at (5*\xx,3.875*\yy){\leaf};
\node (l26) at (5*\xx,3.625*\yy){\leaf};
\node (l27) at (5*\xx,3.375*\yy){\leaf};
\node (l28) at (5*\xx,3.125*\yy){\leaf};
\node (l29) at (5*\xx,2.875*\yy){\leaf};
\node (l30) at (5*\xx,2.625*\yy){\leaf};
\node (l31) at (5*\xx,2.375*\yy){\leaf};
\node (l32) at (5*\xx,2.125*\yy){\leaf};
\node (l33) at (5*\xx,1.875*\yy){\leaf};
\node (l34) at (5*\xx,1.625*\yy){\leaf};
\node (l35) at (5*\xx,1.375*\yy){\leaf};
\node (l36) at (5*\xx,1.125*\yy){\leaf};
\node (l37) at (5*\xx,0.875*\yy){\leaf};
\node (l38) at (5*\xx,0.625*\yy){\leaf};
\node (l39) at (5*\xx,0.375*\yy){\leaf};
\node (l40) at (5*\xx,0.125*\yy){\leaf};
\node (l56) at (5*\xx,-3.875*\yy){\leaf};
\node (l55) at (5*\xx,-3.625*\yy){\leaf};
\node (l54) at (5*\xx,-3.375*\yy){\leaf};
\node (l53) at (5*\xx,-3.125*\yy){\leaf};
\node (l52) at (5*\xx,-2.875*\yy){\leaf};
\node (l51) at (5*\xx,-2.625*\yy){\leaf};
\node (l50) at (5*\xx,-2.375*\yy){\leaf};
\node (l49) at (5*\xx,-2.125*\yy){\leaf};
\node (l48) at (5*\xx,-1.875*\yy){\leaf};
\node (l47) at (5*\xx,-1.625*\yy){\leaf};
\node (l46) at (5*\xx,-1.375*\yy){\leaf};
\node (l45) at (5*\xx,-1.125*\yy){\leaf};
\node (l44) at (5*\xx,-0.875*\yy){\leaf};
\node (l43) at (5*\xx,-0.625*\yy){\leaf};
\node (l42) at (5*\xx,-0.375*\yy){\leaf};
\node (l41) at (5*\xx,-0.125*\yy){\leaf};
\draw[->] (v1) --  node [above, sloped] {\scriptsize{F = yes}} (v2);
\draw[->] (v1) -- node [below, sloped] {\scriptsize{F = no}}(v3);
\draw[->] (v2) --  node [above, sloped] {\scriptsize{E = yes}}(v4);
\draw[->] (v2) --  node [below, sloped] {\scriptsize{E = no}}(v5);
\draw[->] (v3) --  node [above, sloped] {\scriptsize{E = yes}} (v6);
\draw[->] (v3) --  node [below, sloped] {\scriptsize{E = no }} (v7);
\draw[->] (v4) --  node [above, sloped] {\scriptsize{S = yes}} (l1);
\draw[->] (v4) -- node [below, sloped] {\scriptsize{S = no}}  (l2);
\draw[->] (v5) -- node [above, sloped] {\scriptsize{S = yes}}  (l3);
\draw[->] (v5) -- node [below, sloped] {\scriptsize{S = no}}  (l4);
\draw[->] (v6) -- node [above, sloped] {\scriptsize{S = yes}} (l5);
\draw[->] (v6) -- node [below, sloped] {\scriptsize{S = no}} (l6);
\draw[->] (v7) -- node [above, sloped] {\scriptsize{S = yes}} (l7);
\draw[->] (v7) -- node [below, sloped] {\scriptsize{S = no}} (l8);
\draw[->] (l1) --  node [above, sloped] {\scriptsize{P = yes}}(l9);
\draw[->] (l1) --  node [below, sloped] {\scriptsize{P = no}} (l10);
\draw[->] (l2) --  node [above, sloped] {\scriptsize{P = yes}}(l11);
\draw[->] (l2) --  node [below, sloped] {\scriptsize{P = no}} (l12);
\draw[->] (l3) --  node [above, sloped] {\scriptsize{P = yes}}(l13);
\draw[->] (l3) --  node [below, sloped] {\scriptsize{P = no}} (l14);
\draw[->] (l4) --  node [above, sloped] {\scriptsize{P = yes}}(l15);
\draw[->] (l4) --  node [below, sloped] {\scriptsize{P = no}} (l16);
\draw[->] (l5) -- node [above, sloped] {\scriptsize{P = yes}} (l17);
\draw[->] (l5) --  node [below, sloped] {\scriptsize{P = no}} (l18);
\draw[->] (l6) -- node [above, sloped] {\scriptsize{P = yes}} (l19);
\draw[->] (l6) --  node [below, sloped] {\scriptsize{P = no}} (l20);
\draw[->] (l7) -- node [above, sloped] {\scriptsize{P = yes}} (l21);
\draw[->] (l7) --  node [below, sloped] {\scriptsize{P = no}} (l22);
\draw[->] (l8) -- node [above, sloped] {\scriptsize{P = yes}} (l23);
\draw[->] (l8) -- node [below, sloped] {\scriptsize{P = no}} (l24);
\draw[->] (l9) -- node [above, sloped] {\scriptsize{T = yes}} (l25);
\draw[->] (l9) -- node [below, sloped] {\scriptsize{T = no}} (l26);
\draw[->] (l10) -- node [above, sloped] {\scriptsize{T = yes}} (l27);
\draw[->] (l10) -- node [below, sloped] {\scriptsize{T = no}} (l28);
\draw[->] (l11) -- node [above, sloped] {\scriptsize{T = yes}} (l29);
\draw[->] (l11) -- node [below, sloped] {\scriptsize{T = no}} (l30);
\draw[->] (l12) -- node [above, sloped] {\scriptsize{T = yes}} (l31);
\draw[->] (l12) -- node [below, sloped] {\scriptsize{T = no}} (l32);
\draw[->] (l13) -- node [above, sloped] {\scriptsize{T = yes}} (l33);
\draw[->] (l13) -- node [below, sloped] {\scriptsize{T = no}} (l34);
\draw[->] (l14) -- node [above, sloped] {\scriptsize{T = yes}} (l35);
\draw[->] (l14) -- node [below, sloped] {\scriptsize{T = no}} (l36);
\draw[->] (l15) -- node [above, sloped] {\scriptsize{T = yes}} (l37);
\draw[->] (l15) -- node [below, sloped] {\scriptsize{T = no}} (l38);
\draw[->] (l16) -- node [above, sloped] {\scriptsize{T = yes}} (l39);
\draw[->] (l16) -- node [below, sloped] {\scriptsize{T = no}} (l40);
\draw[->] (l17) -- node [above, sloped] {\scriptsize{T = yes}} (l41);
\draw[->] (l17) -- node [below, sloped] {\scriptsize{T = no}} (l42);
\draw[->] (l18) -- node [above, sloped] {\scriptsize{T = yes}} (l43);
\draw[->] (l18) -- node [below, sloped] {\scriptsize{T = no}} (l44);
\draw[->] (l19) -- node [above, sloped] {\scriptsize{T = yes}} (l45);
\draw[->] (l19) -- node [below, sloped] {\scriptsize{T = no}} (l46);
\draw[->] (l20) -- node [above, sloped] {\scriptsize{T = yes}} (l47);
\draw[->] (l20) -- node [below, sloped] {\scriptsize{T = no}} (l48);
\draw[->] (l21) -- node [above, sloped] {\scriptsize{T = yes}} (l49);
\draw[->] (l21) -- node [below, sloped] {\scriptsize{T = no}} (l50);
\draw[->] (l22) -- node [above, sloped] {\scriptsize{T = yes}} (l51);
\draw[->] (l22) -- node [below, sloped] {\scriptsize{T = no}} (l52);
\draw[->] (l23) -- node [above, sloped] {\scriptsize{T = yes}} (l53);
\draw[->] (l23) -- node [below, sloped] {\scriptsize{T = no}} (l54);
\draw[->] (l24) -- node [above, sloped] {\scriptsize{T = yes}} (l55);
\draw[->] (l24) -- node [below, sloped] {\scriptsize{T = no}} (l56);
\node at (-0.5,4) {
\begin{tabular}{cc}
     stage & $P($P = yes$| \cdot)$ \\
     \midrule
    \textcolor{purple}{purple} & $0.118$ \\
    \textcolor{orange}{orange} & $0.203$ \\
    \textcolor{green}{green} & $0.171$ \\
     \textcolor{blue}{blue} & $0.269$ \\
     \textcolor{cyan}{cyan} & $0.335$ \\
\end{tabular}
};
\node at (-0.5,-4) {
\begin{tabular}{cc}
     stage & $P($T = yes$| \cdot)$ \\
     \midrule
    \textcolor{brown}{brown} & $0.911$ \\
    \textcolor{red}{red} & $0.883$ \\
    \textcolor{yellow}{yellow} & $0.931$  \\ 
     \textcolor{cyan}{cyan} & $0.941$
\end{tabular}
};
\end{tikzpicture}
}

\caption{Staged tree maximizing the BIC for the order (F,E,S,P,T). \label{fig:staged3}}
\end{figure}
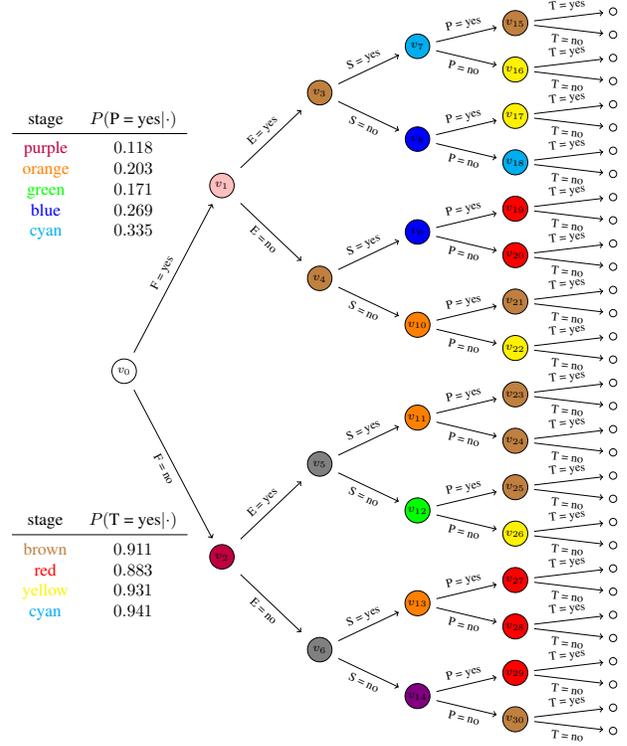

In order to understand in detail how the variables causally depend on each other, we can refer to the stages in Figure \ref{fig:staged3}. 
The staging over the variables P and T (vertices $v_7$ to $v_{30}$) shows an highly asymmetric dependence structure which could not be represented by a DAG model. For instance, the staging $\{v_{10},v_{11},v_{13}\}$ implies that, in the context S = yes and F = no, E has no causal effect on P. 
For the causal effects on watching TV (T), we can see that there are 
various contexts for which the trust on people (P) does not 
have an effect.
For example among people who practice sports regularly and 
are not satisfied with the environment of the area they live in. 
In the same context (S=yes, E=no) having friends they can count (F) on does also not 
appear to be a relevant factor for watching TV (T).
Moreover, for people who have friends they can rely on (F=yes), 
the probabilities of watching TV are the same if they do not practice sports (S=no) 
and they live in an unsatisfactory area (E=no), or they do
practice sports (S=yes) and they are satisfied with their area (E=yes).

It is apparent that the flexibility of the staging enables the intuitive representation of complex non-symmetric causal relationships learned from data.
In the Appendix, we further report the 
learned DAGs with different methods, interesting we observe that 
a variant of the PC algorithm~\citep{colombo14a} recovers a similar 
causal order while heuristics optimizing the BIC score are not 
able to infer any causal orderings of the variables.

\subsection{Outcomes for Hospitalised SARS-CoV-2 Patients}

We consider data on the trajectories of 
hospitalized SARS-CoV-2 patients in France
during the first nine months of the pandemic. 
In particular, we rely on the conditional 
probabilities reported by~\citet{LEFRANCQ2021} on the event that the hospitalized patient was transferred to ICU 
conditioned on gender, age, and on their death conditioned 
on gender, age, and if in ICU or not. 
Such probabilities were estimated by \citet{LEFRANCQ2021} from data on patients, recorded in the SI-VIC database, who started their hospitalization between 13 March and 30 November 2020.
Using those probabilities, we sampled 10000 artificial trajectories 
using the assumed true causal order (gender, age)$\to$ICU$\to$death.
We use the sampled trajectories to estimate a staged tree model using the BHC algorithm and the variable order search. 
The obtained staged tree model recovers the true causal order
and 
has a BIC score of 60421.77 while the DAG obtained with a \texttt{tabu}~\citep{russell2009artificial} 
search (optimizing BIC) obtains a higher BIC of 64227.09 and 
a complete DAG but the arc between gender and age.
Instead, the PC-stable algorithm~\citep{colombo14a} 
obtains a causal order similar to the one used in the data-generating
mechanism and the one retrieved by the staged tree.
We refer to  
the Appendix for the details on the  
learned staged tree, additional comments 
on the learned structure and comparisons with other DAG methods.

\subsection{ENSO Effects on Spring Precipitation in Australia}

We replicate here one of the examples described
by~\citet{QuantifyingCausalPathwaysofTeleconnections}. 
We consider, in particular, the causal inference question 
regarding the effect of El Niño Southern Oscillation (ENSO) 
on Australian precipitation (AU) during
spring, and the possible mediation of the 
Indian Ocean Dipole (IOD). 
As observed by 
\citet{QuantifyingCausalPathwaysofTeleconnections}:  
\textit{The influence of ENSO on the IOD, and thereby on AU, has been suggested to exhibit asymmetries in strength, implying that the relationship is nonlinear}. 
Instead of fitting a categorical DAG, we instead 
rely on staged tree models to capture and depict the 
asymmetric causal relationship in the data. 
The data are discretized following~\citet{QuantifyingCausalPathwaysofTeleconnections}: ENSO is reduced to three 
possible values (Niño, neutral, Niña), IOD into
three levels describing positive (+), neutral (0) and negative (-) phases; and AU is separated into
above (high) and below (low) average values.
We estimate the staged structure via the BHC algorithm (optimizing the AIC score) for the variable order (ENSO, IOD, AU), and we 
plot in Figure~\ref{fig:enso} the resulting staged tree (AIC$=368.23$) together with
the conditional probabilities of high AU 
in the three stages for the last variable. 
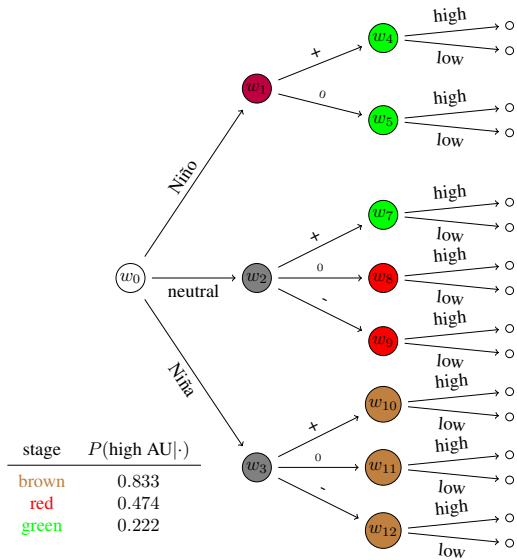
\begin{figure}
    \centering
\scalebox{0.7}{
\begin{tikzpicture}
\renewcommand{\xx}{2.4}
\renewcommand{\yy}{1.2}
\node (w0) at (0*\xx,0*\yy) {\stage{white}{0}{w}};
\node (w1) at (1*\xx,3*\yy) {\stage{purple}{1}{w}};
\node (w2) at (1*\xx,0*\yy) {\stage{gray}{2}{w}};
\node (w3) at (1*\xx,-3*\yy) {\stage{gray}{3}{w}};
\node (w4) at (2*\xx,3.8*\yy) {\stage{green}{4}{w}};
\node (w5) at (2*\xx,2.5*\yy) {\stage{green}{5}{w}};
\node (w7) at (2*\xx,1*\yy) {\stage{green}{7}{w}};
\node (w8) at (2*\xx,0*\yy) {\stage{red}{8}{w}};
\node (w9) at (2*\xx,-1*\yy) {\stage{red}{9}{w}};
\node (w10) at (2*\xx,-2*\yy) {\stage{brown}{10}{w}};
\node (w11) at (2*\xx,-3*\yy) {\stage{brown}{11}{w}};
\node (w12) at (2*\xx,-4*\yy) {\stage{brown}{12}{w}};
\node (l1) at (3*\xx,4*\yy) {\leaf};
\node (l2) at (3*\xx,3.6*\yy) {\leaf};
\node (l3) at (3*\xx,2.7*\yy) {\leaf};
\node (l4) at (3*\xx,2.3*\yy) {\leaf};
\node (l5) at (3*\xx,1.2*\yy) {\leaf};
\node (l6) at (3*\xx,0.8*\yy) {\leaf};
\node (l7) at (3*\xx,0.2*\yy) {\leaf};
\node (l8) at (3*\xx,-0.2*\yy) {\leaf};
\node (l9) at (3*\xx,-0.8*\yy) {\leaf};
\node (l10) at (3*\xx,-1.2*\yy) {\leaf};
\node (l11) at (3*\xx,-1.8*\yy) {\leaf};
\node (l12) at (3*\xx,-2.2*\yy) {\leaf};
\node (l13) at (3*\xx,-2.8*\yy) {\leaf};
\node (l14) at (3*\xx,-3.2*\yy) {\leaf};
\node (l15) at (3*\xx,-3.8*\yy) {\leaf};
\node (l16) at (3*\xx,-4.2*\yy) {\leaf};
\draw[->] (w0) -- node [above, sloped] {Niño} (w1);
\draw[->] (w0) -- node [below, sloped] {neutral} (w2);
\draw[->] (w0) -- node [below, sloped] {Niña} (w3);
\draw[->] (w1) -- node [above, sloped] {+} (w4);
\draw[->] (w1) -- node [above, sloped] {\tiny $0$} (w5);
\draw[->] (w2) -- node [above, sloped] {+} (w7);
\draw[->] (w2) -- node [above, sloped] {\tiny $0$} (w8);
\draw[->] (w2) -- node [above, sloped] {-} (w9);
\draw[->] (w3) -- node [above, sloped] {+} (w10);
\draw[->] (w3) -- node [above, sloped] {\tiny $0$} (w11);
\draw[->] (w3) -- node [above, sloped] {-} (w12);
\draw[->] (w4)--node [above, sloped] {high} (l1); \draw[->] (w4)--node [below, sloped] {low}(l2);
\draw[->] (w5)--node [above, sloped] {high}(l3); \draw[->] (w5) --node [below, sloped] {low}(l4);
\draw[->] (w7)--node [above, sloped] {high}(l5); \draw[->] (w7) --node [below, sloped] {low}(l6);
\draw[->] (w8)--node [above, sloped] {high}(l7); \draw[->] (w8) --node [below, sloped]{low}(l8);
\draw[->] (w9)--node [above, sloped] {high}(l9); \draw[->] (w9) --node [below, sloped]{low}(l10);
\draw[->](w10)--node[above, sloped]{high}(l11);\draw[->](w10)--node[below,sloped]{low}(l12);
\draw[->](w11)--node [above, sloped]{high}(l13);\draw[->] (w11)--node[below,sloped]{low}(l14);
\draw[->](w12)--node[above, sloped]{high}(l15);\draw[->](w12)--node[below,sloped]{low}(l16);
\node (tab) at (-0.5,-4) {
\begin{tabular}{cc}
     stage & $P($high AU$| \cdot)$ \\
     \midrule
    \textcolor{brown}{brown} & $0.833$ \\
    \textcolor{red}{red} & $0.474$ \\
     \textcolor{green}{green} & $0.222$ \\
    
\end{tabular}
};
\end{tikzpicture}
}
    \caption{Staged tree estimated with the BHC algorithm for the 
    ENSO-IOD-AU example and 
    estimated conditional probabilities for 
    high AU in the three recovered stages.}
    \label{fig:enso}
\end{figure}
We can observe that indeed there is an 
asymmetric relationship between ENSO, IOD 
and AU. 
In particular, AU does not depend on IOD 
in the extreme phases of ENSO (la Niña and el Niño) while the model suggests a negative correlation between IOD+ and high AU 
in the neutral ENSO phase.
These findings are consistent with the 
ones obtained 
by~\citet{QuantifyingCausalPathwaysofTeleconnections} by directly analyzing the contingency tables. 
Additionally, we can observe that the 
ENSO-IOD relationship seems to be asymmetric as well; from the stages of IOD we infer that 
$P($IOD$|$Niña$)=P($IOD$|$neutral$)$.
We conclude that in this example staged tree models allow a more intuitive and explainable analysis. 

As a further experiment we 
consider the alternative causal order 
IOD-ENSO-AU~\citep[as analyzed also by ][]{QuantifyingCausalPathwaysofTeleconnections}
and we estimate a staged tree model with the 
BHC algorithm. 
The obtained staged tree (see the Supplementary Material) has an AIC score of 371.04 and thus
the staged tree models suggest that the 
appropriate variables order is ENSO-IOD-AU.

\section{CONCLUSIONS}

We introduced and implemented causal discovery algorithms based on staged trees which extend classic DAG models to account for complex, non-symmetric causal relationships. In order to assess the effectiveness of staged trees in causal reasoning, we defined a new discrepancy that measures the agreement between the interventional distributions of two staged trees. Our simulation experiments demonstrate that if data is simulated from a staged tree model, and therefore embeds non-symmetric relationships between variables, staged trees outperform DAG models. Our real-world applications further highlight the need for non-symmetric models since staged trees, despite their complexity, outperform DAGs in terms of penalized fit and causal discovery.

We demonstrated that staged tree models can be
a valuable tool for causal discovery in real-world scenarios 
and various directions for future work are possible. 
We are currently focusing on the derivation of theoretical results 
about the identifiability of the causal order when non-symmetric relationships between two variables are present. 

Additional heuristics to learn the stage structures 
are currently being developed, and similarly 
different strategies for learning variable ordering. 
While the methods described in the present work obtain 
good results, they are lacking in scalability and 
new heuristics are needed to tackle a larger number of variables efficiently.

\subsubsection*{Acknowledgements}
Gherardo Varando’s work was funded by the European Research Council (ERC) Synergy
Grant “Understanding and Modelling the Earth System with Machine Learning (USMILE)”
under Grant Agreement No 855187.

\bibliographystyle{plainnat}
\bibliography{biblio_causal}

\appendix
\onecolumn

\section{Algorithm to Compute CID}

In the pseudo-code for the algorithm, we use the following notation: for a staged event tree $S=(W,F,\nu)$ we denote with
$\sim_{S}$ the equivalence relation over $V$ defined by $u \sim_S v$ if and only if $\nu(E(v)) = \nu(E(v))$.
Thus the equivalence classes, with respect to the above-defined
relation, are the stages of $S$.

\begin{algorithm}
\caption{Compute context interventional discrepancy} 
\label{alg:cid}
  \algsetup{linenosize=\tiny}
  \scriptsize
        \begin{algorithmic}
        \renewcommand{\algorithmiccomment}[1]{\hfill \# #1}
                \REQUIRE $T=(V,E,\eta)$ an $\bm{X}$-compatible staged event tree and
                         $S = (W,F,\nu)$ an $\bm{X}_{\pi}$-compatible 
                         staged event tree. 
                \ENSURE The CID between $T$ and $S$.   
       \STATE initialize $\texttt{CID} = 0$
       \FOR{$i = 1$ to $p$}       
              \STATE $k = \pi^{-1}(i)$ \COMMENT{the position of variable $X_i$ in staged tree $S$}
              \STATE $I = \{j: j < i \And \pi^{-1}(j) < k \} $
              \STATE $\mathcal{A} = \mathbb{X}_{\pi^{-1}([k-1])} / \sim_S $
               \COMMENT{the stages of $S$ at depth $k$}
               \STATE $\texttt{wrong} = \emptyset$
              \FOR {$A \in \mathcal{A}$}
                    \STATE $B_A = \{ \bm{x}_{[i-1]} \in \mathbb{X}_{[i-1]}: 
                             \bm{x}_I = \bm{y}_I \text{ for some }
                             \bm{y} \in A \}$
                    \IF{$|\eta(E(B_A))| > 1$} 
                       \STATE  $\texttt{wrong} = \texttt{wrong} \cup B_A$
                    \ENDIF
                    \STATE $\texttt{CID} = \texttt{CID} +  \frac{|\texttt{wromg}|}{|\mathbb{X}_{[i-1]}|}$ 
              \STATE 
              \ENDFOR
        \ENDFOR
        \end{algorithmic}
\end{algorithm} 

We here prove that Algorithm 1  correctly computes the CID as defined in Definition~\ref{def:cid}. Formally, we demonstrate that for any pair of $X$ and $X_{\pi}$ compatible staged event 
trees, $T$ and $S$, the output of Algorithm~\ref{alg:cid} is equal to 
$CID(T,S)$.

\proof
We need only to prove that the procedure in the first loop 
effectively identifies wrongly inferred interventional distributions 
of the type $P(X_i | \Do(\bm{X}_{[i-1]} = \bm{x}_{[i-1]}))$.
That is, at the end of $i$-th iteration of the 
first for loop, the variable $\texttt{wrong}$ contains the set of 
contexts $\bm{x}_{[i-1]} \in \mathbb{X}_{[i-1]}$ such that 
the corresponding interventional distribution for $X_i$ is wrongly 
inferred by $S$ with respect to $T$.
If $\bm{x}_{[i-1]} \in \texttt{wrong}$ then, by construction, 
there exists, a stage of $S$, $A \in \mathcal{A}$ such that $\bm{x}_{[i-1]} \in B_A$, and 
there exists $\bm{x}'_{[i-1]} \in B_A$ with
                             $\eta(E(\bm{x}_{[i-1]})) \neq \eta(E(\bm{x}'_{[i-1]}))$.
Thus, there exists $P \in \mathcal{M}_T$ such that 
$P(X_i| \bm{X}_{[t-1]} = \bm{x}_{[i-1]}) \neq P(X_i|\bm{X}_{[t-1]} = \bm{x}'_{[i-1]})$ and moreover 
\[ P(X_i| \bm{X}_{[i-1]} = \bm{x}_{[i-1]}) \neq P(X_i| \bm{X}_{[i-1]} \in B_A ).\]
Vice versa, if $\bm{x}_{[i-1]} \not\in \texttt{wrong}$,  we have that, since $\eta(E(B_A))$ is a singleton (all nodes in $B_A$ are in the same stage in $T$) thus, 
\[ P(X_i| \bm{X}_{[i-1]} = \bm{x}_{[i-1]}) = P(X_i| \bm{X}_{[i-1]} \in B_A ),\]
for every 
$P \in \mathcal{M}_T$ and for every $A \in \mathcal{A}$ such that $\bm{x}_{[i-1]} \in B_A$. 
And thus $S$ correctly infers $P(X_i|\bm{X}_{[i-1]} = \bm{x}_{[i-1]})$. 
\endproof

\section{Identifiability of the Causal Order} 

\label{sec:example}
As an instructive example, we report here a bivariate staged tree model compatible with  $(X_1, X_2)$ where the causal order 
can be identified by choosing the simpler model.

Consider the two staged trees depicted in Fig.~\ref{fig:bivcausal}. 
Let $T$ be the staged tree on the right and $S$ be the one on the left.  It is easy to see that if the data are generated from a joint probability 
distribution $P \in \mathcal{M}_T$, $S$ is the only $(X_1, X_2)$-compatible 
staged tree such that $P \in\mathcal{M}_S$. Indeed, since $\mathcal{M}_S$ is the saturated model we have that $\mathcal{M}_S$ is equal to the entire 
probability simplex. 

On the other hand, $\mathcal{M}_T \subsetneq \mathcal{M}_S$, since 
$P(X_1| X_2 = 2) = P(X_1|X_2 = 3)$ in $\mathcal{M}_T$, and thus 
the causal order is here identified by choosing the simplest model 
which describes the data-generating process. 
Even if this is a very simple example, it is  instructive to see that non-symmetrical conditional independence statements can be leveraged by
staged event trees to discover causal structure in categorical data.

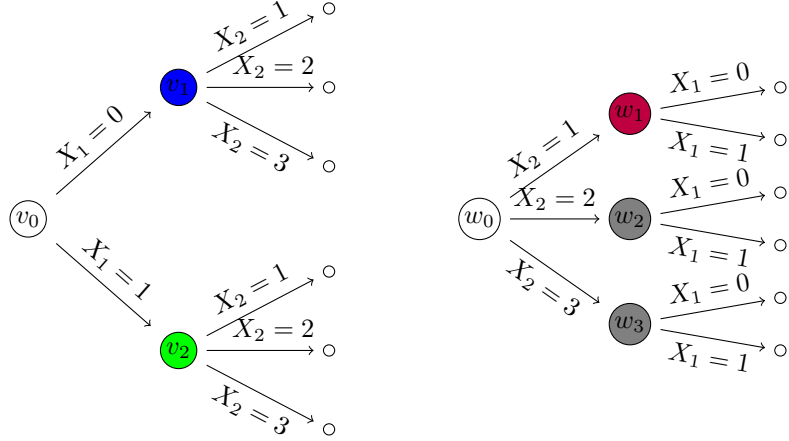
\begin{figure}[ht]
\centering
\begin{tikzpicture}
\renewcommand{\xx}{2}
\renewcommand{\yy}{0.7}
\node (v1) at (0*\xx,0*\yy) {\stage{white}{0}{v}};
\node (v2) at (1*\xx,2.5*\yy) {\stage{blue}{1}{v}};
\node (v3) at (1*\xx,-2.5*\yy) {\stage{green}{2}{v}};
\node (l1) at (2*\xx,4*\yy) {\leaf};
\node (l2) at (2*\xx,2.5*\yy) {\leaf};
\node (l3) at (2*\xx,1*\yy) {\leaf};
\node (l4) at (2*\xx,-1*\yy) {\leaf};
\node (l5) at (2*\xx,-2.5*\yy) {\leaf};
\node (l6) at (2*\xx,-4*\yy) {\leaf};
\draw[->] (v1) -- node [above, sloped] {$X_1=0$} (v2);
\draw[->] (v1) -- node [above, sloped] {$X_1=1$} (v3);
\draw[->] (v2) -- node [above, sloped] {$X_2=1$} (l1);
\draw[->] (v2) -- node [above, sloped] {$\quad X_2=2$} (l2);
\draw[->] (v2) -- node [below, sloped] {$X_2=3$} (l3);
\draw[->] (v3) -- node [above, sloped] {$X_2=1$} (l4);
\draw[->] (v3) -- node [above, sloped] {$\quad X_2=2$} (l5);
\draw[->] (v3) -- node [below, sloped] {$X_2=3$} (l6);

\node (w1) at (3*\xx,0*\yy) {\stage{white}{0}{w}};
\node (w2) at (4*\xx,2*\yy) {\stage{purple}{1}{w}};
\node (w3) at (4*\xx,0*\yy) {\stage{gray}{2}{w}};
\node (w4) at (4*\xx,-2*\yy) {\stage{gray}{3}{w}};
\node (l1) at (5*\xx,2.5*\yy) {\leaf};
\node (l2) at (5*\xx,1.5*\yy) {\leaf};
\node (l3) at (5*\xx,0.5*\yy) {\leaf};
\node (l4) at (5*\xx,-0.5*\yy) {\leaf};
\node (l5) at (5*\xx,-1.5*\yy) {\leaf};
\node (l6) at (5*\xx,-2.5*\yy) {\leaf};
\draw[->] (w1) -- node [above, sloped] {$X_2=1$} (w2);
\draw[->] (w1) -- node [above, sloped] {$X_2=2$} (w3);
\draw[->] (w1) --  node [below, sloped] {$X_2=3$} (w4);
\draw[->] (w2) -- node [above, sloped] {$X_1=0$} (l1);
\draw[->] (w2) -- node [below, sloped] {$X_1=1$} (l2);
\draw[->] (w3) -- node [above, sloped] {$X_1=0$} (l3);
\draw[->] (w3) -- node [below, sloped] {$X_1=1$} (l4);
\draw[->] (w4) -- node [above, sloped] {$X_1=0$} (l5);
\draw[->] (w4) -- node [below, sloped] {$X_1=1$} (l6);
\end{tikzpicture}

\caption{An example of an $(X_1, X_2)$-compatible (left) 
and an $(X_2, X_1)$-compatible (right) staged trees.
\label{fig:bivcausal}}
\end{figure}

\section{MISSING PROOFS}

\subsection{Proof of Proposition 1}

\proof

The first of the proposition 
follows from the observation that, 
imposing $G_T$ and $G'$ and  
the restricted sub-graphs (to total edges) 
$([p], F^{tot})$ and $([p], F')$  in the 
same Markov equivalence class implies that 
for each $j\in [p]$, all variables 
involved in non-asymmetric conditional 
independence statements with $X_j$, as well 
as all the other parents of $X_j$,
must appear before $X_j$ in every 
topological order of $G'$.
To prove that, consider 
$i < j \in [p]$ and 
assume that the edge $(i,j) \in F^{nt}$,
so by definition 
there exist contexts $\bm{x}_{[i-1]}$
and $\bm{x}'_{[i-1]}$, such that 
$\eta(E(\bm{x}_{[i-1]})) = 
\eta(E(\bm{x}'_{[i-1]}))$.
If we additionally assume $(k,j) \in G_{T}$,
for another $i\neq k < j$, we have that, 
\begin{itemize}
    \item If one of $(j,k)$ or $(k,j)$ 
    appears in $G_T$,
    then $(j,k)$ or $(k,j)$ must also be 
    in $G'$ (since they are Markov equivalent). 
    Similarly one of $(i,k)$ or $(k,i)$ must be in
    $G'$. Since $(i,j) \in F^{nt}$ and thus in 
    $G'$ by construction, then the only possibility is that $(k,j)$ is the 
    direction that appears in $G'$, otherwise
    either the acyclicity constrain or 
    the Markov equivalence between 
    $([p], F^{tot})$ and $([p], F')$ are violated.
    \item otherwise, the v-structure 
          $i \rightarrow j \leftarrow k$ is in 
          $G_T$ and thus in $G'$ (since they are Markow equivalent). 
\end{itemize}

Thus, summarizing, we have proved that 
if there is a non-total edge $(i,j)$
in $G_T$, the conditions on $G'$ imply
that all parents of $j$ in $G_T$ 
are also parents of $j$ in $G'$.
Let now be $\pi$ a topological order of $G'$;
it is easy to see that, we can build 
an $\bm{X}_{\pi}$-compatible staged tree $S$
such that   
$\mathcal{M}_T = \mathcal{M}_{S}$.

For the last statement, 
we prove the equivalent ``if $G_T$ and $G_{T'}$ are not Markov equivalent then $\mathcal{M}_T\neq \mathcal{M}_{T'}$". If $G_T$ and $G_{T'}$ are not Markov equivalent, it means that there is a conditional independence $X_A\independent X_B|X_C$ which is $\mathcal{M}_{G_T}$ but not in $\mathcal{M}_{G_{T'}}$, without loss of generality. However, the definition of minimal DAG implies that if $X_A\independent X_B | X_C$ is in $\mathcal{M}_{G_T}$ then it must also be in $\mathcal{M}_T$. Similarly, if $X_A\independent X_B | X_C$ is not in $\mathcal{M}_{G_{T'}}$ then it must not be in $\mathcal{M}_{T'}$. Therefore $\mathcal{M}_T\neq\mathcal{M}_{T'}$.
\endproof

As we have seen in the proof of the proposition
the conditions we impose on $G_T$ and $G'$ 
are very strong, and in fact, it is known that
in some cases, there are statistically equivalent 
staged trees whose minimal DAGs do not 
satisfy those assumptions. 
A more complete characterization of the 
equivalence classes of $\bm{X}_{\pi}$-compatible
staged trees needs probably to consider 
the different types of non-symmetric conditional
independences such as the ones discussed in
\citet{Varando2021}.

\subsection{Proof of Proposition 2}

\proof
If two staged tree $T,S$ are causally equivalent 
then for every $P \in \mathcal{M}_T = \mathcal{M}_S$ we have 
\[ P(X_i | \Do(\bm{X}_{[i-1]} = \bm{x}_{[i-1]})) =  P(X_i | \Do(\bm{X}_{I} = \bm{x}_{I})),   \]
thus,
\[ 
\begin{aligned}
P(X_i = x_i| \bm{X}_{[i-1]} = \bm{x}_{[i-1]}) &=    P(X_i = x_i | \Do(\bm{X}_{I} = \bm{x}_I))  \\
&= 
   P(X_i = x_i |  \bm{X}_I \in \{ \bm{y}_I \in \mathbb{X}_I: \nu(\bm{y}_K,w) = \nu(\bm{x}_K,u)  \} )
   \end{aligned}
\]
which proves point (i). 

To prove point (ii), observe that since $\pi$ is the identity, 
both $T = (V,E,\eta)$ and $S=(V,E,\nu)$ 
are $\bm{X}$-compatible staged event trees and thus 
$\mathcal{M}_T \subseteq \mathcal{M}_S$ implies that 
$\nu(u,v) = \nu(u', v') \Rightarrow \eta(u,v) = \eta(u', v')$
(the stage structure of $T$ is coarser than the one of $S$). 
Since $P \in \mathcal{M}_T$, we have,
\[
P(X_i = x_i| \bm{X}_{[i-1]} = 
\bm{x}_{[i-1]}) = 
   P(X_i = x_i | \bm{X}_{[i-1]} \in \{ \bm{y}_{[i-1]} \in \mathbb{X}_{[i-1]}: \nu(\bm{y}_{[i-1]},w) = \nu(\bm{x}_{[i-1]},u)  \} ).
\]

Finally, point (iii) follows from points (i) and (ii) by observing that if $T$ is the completely independent model then $T$ is causally equivalent to any completely independent $\bm{X}_{\pi}$-compatible staged tree, for any permutation $\pi$.
\endproof

\section{ADDITIONAL RESULTS}

\subsection{Simulation Experiment}

We report in Figures~\ref{fig:p3}, \ref{fig:p4} and \ref{fig:p5} the additional results for 
$p=3,4,5$ which could not fit in the main paper. 
Results show similar patterns to the case $p=6$ 
reported in the main paper.

Additionally, we plot the computational time 
for the four considered methods in Figure~\ref{fig:time}.

\begin{figure}
    \centering
    \includegraphics{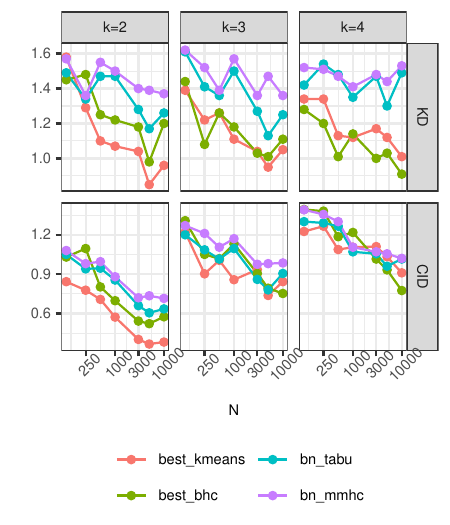}
    \caption{Results for the simulation experiments with $p=3$ binary variables.}
    \label{fig:p3}
\end{figure}

\begin{figure}
    \centering
    \includegraphics{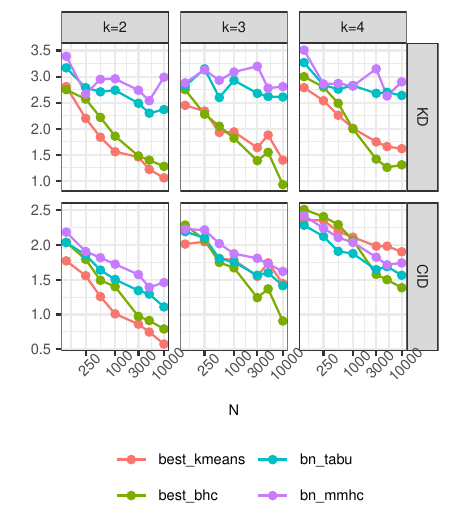}
    \caption{Results for the simulation experiments with $p=4$ binary variables.}
    \label{fig:p4}
\end{figure}

\begin{figure}
    \centering
    \includegraphics{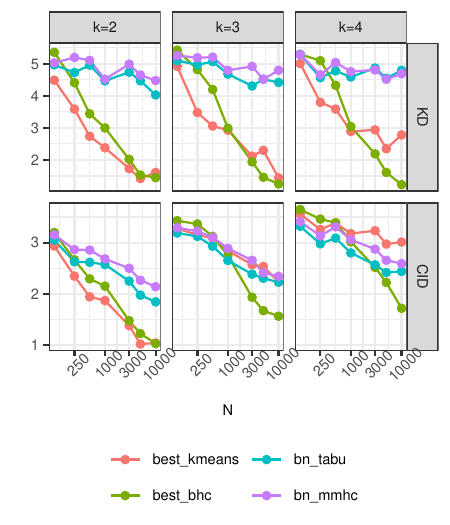}
    \caption{Results for the simulation experiments with $p=5$ binary variables.}
    \label{fig:p5}
\end{figure}

\begin{figure}
    \centering
    \includegraphics{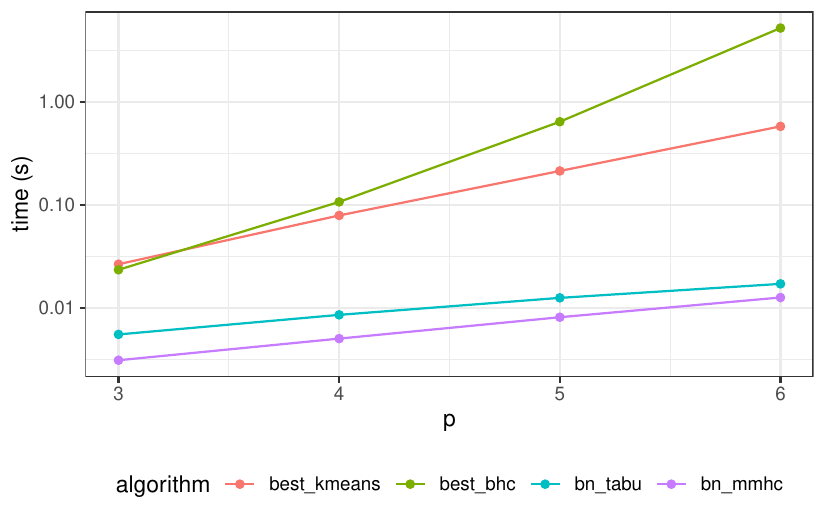}
    \caption{Results for the simulation experiments, computational time as a function of the number of variables $p$.}
    \label{fig:time}
\end{figure}

\subsection{Real World Examples}

\subsubsection{ISTAT: Aspects on Everyday Life}

For the ISTAT data on aspects of everyday life 
we fit standard DAG categorical models 
using two standard state-of-the-art methods: 
the PC algorithm with order-invariant implementation~\citep{colombo14a} and a 
hill-climbing search with tabu~\citep{russell2009artificial} list for 
optimizing the BIC score, both 
methods are available through the 
\texttt{bnlearn} package~\citep{Scutari2010}.
In Figure~\ref{fig:cpdagistat} the CPDAGs obtained 
with the two methods are reported. 
The CPDAGs are the unique representation of the DAG
Markov equivalence class. We can thus 
infer that both algorithms obtain a partial ordering 
(S,E)$\rightarrow$T, where the variable associated with 
watching television is estimated to be an effect of both
practicing sports and the satisfaction in the environment. 
Similar to the results obtained with the staged tree, 
the PC-stable algorithm identify also a partial ordering 
of the variables (S,F,E)$\rightarrow$ P while the 
ordering among S,F, and E cannot be inferred from data.
This results agree with the ones obtained through 
the staged tree models. 
The staged trees have the additional advantage of
depicting context-specific conditional independences.

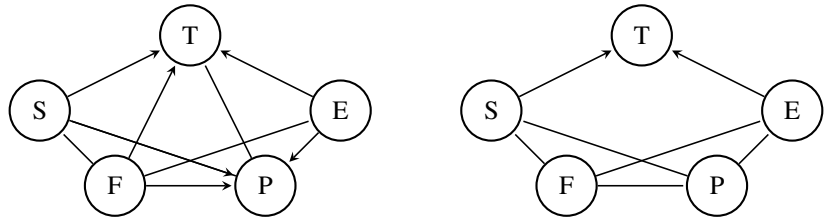
\begin{figure}
    \centering
\begin{tikzpicture}[
            > = stealth, 
            shorten > = 1pt, 
            auto,
            node distance = 1.5cm, 
            semithick 
        ]

        \tikzstyle{every state}=[
            draw = black,
            thick,
            fill = white,
            minimum size = 8mm
        ]

        \node[state] at(0,0) (1) {F};
        \node[state] at(2,0) (2) {P};
        \node[state] at(-1,1)  (3) {S};
        \node[state] at (3,1) (4){E};
        \node[state] at (1,2) (5) {T};
        
        \path[->] (1) edge   (2);
        \path[-] (1) edge   (3);
         \path[->] (1) edge   (5);
        \path[-] (1) edge   (4);
        \path[-] (2) edge   (3);
        \path[->] (4) edge   (2);
         \path[-] (2) edge   (5);
        \path[->] (3) edge (5);
        \path[->] (3) edge (2);
        \path[->] (4) edge (5);
        
 \end{tikzpicture}
 \hspace{1cm}
 \begin{tikzpicture}[
            > = stealth, 
            shorten > = 1pt, 
            auto,
            node distance = 1.5cm, 
            semithick 
        ]

        \tikzstyle{every state}=[
            draw = black,
            thick,
            fill = white,
            minimum size = 8mm
        ]

        \node[state] at(0,0) (1) {F};
        \node[state] at(2,0) (2) {P};
        \node[state] at(-1,1)  (3) {S};
        \node[state] at (3,1) (4){E};
        \node[state] at (1,2) (5) {T};
        
        \path[-] (1) edge   (2);
        \path[-] (1) edge   (3);
        \path[-] (1) edge   (4);
        \path[-] (2) edge   (3);
        \path[-] (2) edge   (4);
        \path[->] (3) edge (5);
        \path[->] (4) edge (5);
        
 \end{tikzpicture}
    \caption{CPDAGs for the ISTAT dataset, constructed with  PC-stable (left) and tabu search (right).}
    \label{fig:cpdagistat}
\end{figure}

\subsubsection{Outcomes for Hospitalised SARS-CoV-2 Patients}

We report additional figures and detail on the 
staged tree and DAGs learned for the data on trajectories of
hospitalized SARS-CoV-2 patients in France, during
the first nine months of the pandemic.

Data were obtained by simulations from a 
probability tree where conditional probabilities 
were obtained from \citet{LEFRANCQ2021}. 
In particular, conditional probabilities 
of ICU admission given age and gender and probabilities 
of death given ICU admission, age, and gender were obtained
from the tables on the supplementary materials 
provided by \citet{LEFRANCQ2021}.
Marginal probabilities of gender and probabilities
of age given gender were instead obtained from the 
linked GitHub repository.\footnote{
\url{https://github.com/noemielefrancq/Evolution-Outcomes-COVID19-France}}

The above conditional probabilities, define a 
probability event tree (a saturated staged tree model), 
with variables ordered gender, age, ICU, and death.
Simulations can be thus performed by iterative sampling 
of those variables. 
We simulate $10000$ trajectories and we use the 
artificial data to learn a staged tree model using the 
backward hill-climbing search coupled with 
the dynamic programming approach for 
variables order.

In Figure~\ref{fig:stcovid} we plot the learned staged tree,
we can appreciate that all nodes at depth one are in the same stage, thus the first two variables, age, and gender are
inferred to be independent, their causal order is thus 
not discernible from data (the age distributions given gender are indeed very similar).
For variables ICU and death, non-symmetrical and 
context-dependent conditional independences are presents.

In Figure~\ref{fig:cpdagcovid} we report
the CPDAG obtained with the PC-stable~\citep{colombo14a} 
and 
tabu algorithms~\citep{russell2009artificial}.
We observe that the score-based method (tabu) 
maximising BIC, is not able to find any 
ordering of the variables, except a 
conditional independence between gender and age.
On the contrary, the PC-stable results obtain 
a causal order similar to the one used in the 
data generating mechanism and the one 
retrieved by the staged tree.

\begin{figure}
\centering
\scalebox{0.6}{
\begin{tikzpicture}
\renewcommand{\xx}{2.2}
\renewcommand{\yy}{2.4}
\node (v0) at (0*\xx,0*\yy) {\stages{white}{0}};
\node (v1) at (1*\xx,3*\yy) {\stages{white}{1}};
\node (v2) at (1*\xx,1.8*\yy) {\stages{white}{2}};
\node (v3) at (1*\xx,0.6*\yy) {\stages{white}{3}};
\node (v4) at (1*\xx,-0.6*\yy) {\stages{white}{4}};
\node (v5) at (1*\xx,-1.8*\yy) {\stages{white}{5}};
\node (v6) at (1*\xx,-3*\yy) {\stages{white}{6}};
\node (v12) at (2*\xx,0.3*\yy) {\stages{gray}{12}};
\node (v11) at (2*\xx,0.9*\yy) {\stages{green}{11}};
\node (v10) at (2*\xx,1.5*\yy) {\stages{gray}{10}};
\node (v9) at (2*\xx,2.1*\yy) {\stages{red}{9}};
\node (v8) at (2*\xx,2.7*\yy) {\stages{blue}{8}};
\node (v7) at (2*\xx,3.3*\yy) {\stages{cyan}{7}};
\node (v13) at (2*\xx,-0.3*\yy) {\stages{green}{13}};
\node (v14) at (2*\xx,-0.9*\yy) {\stages{gray}{14}};
\node (v15) at (2*\xx,-1.5*\yy) {\stages{red}{15}};
\node (v16) at (2*\xx,-2.1*\yy) {\stages{gray}{16}};
\node (v17) at (2*\xx,-2.7*\yy) {\stages{gray}{17}};
\node (v18) at (2*\xx,-3.3*\yy) {\stages{gray}{18}};
\node (v19) at (3*\xx,3.45*\yy) {\stages{cyan}{19}};
\node (v20) at (3*\xx,3.15*\yy) {\stages{purple}{20}};
\node (v21) at (3*\xx,2.85*\yy) {\stages{white}{21}};
\node (v22) at (3*\xx,2.55*\yy) {\stages{purple}{22}};
\node (v23) at (3*\xx,2.25*\yy) {\stages{gray}{23}};
\node (v24) at (3*\xx,1.95*\yy) {\stages{cyan}{24}};
\node (v25) at (3*\xx,1.65*\yy) {\stages{gray}{25}};
\node (v26) at (3*\xx,1.35*\yy) {\stages{cyan}{26}};
\node (v27) at (3*\xx,1.05*\yy) {\stages{yellow}{27}};
\node (v28) at (3*\xx,0.75*\yy) {\stages{white}{28}};
\node (v29) at (3*\xx,0.45*\yy) {\stages{yellow}{29}};
\node (v30) at (3*\xx,0.15*\yy) {\stages{white}{30}};
\node (v31) at (3*\xx,-0.15*\yy) {\stages{yellow}{31}};
\node (v32) at (3*\xx,-0.45*\yy) {\stages{gray}{32}};
\node (v33) at (3*\xx,-0.75*\yy) {\stages{yellow}{33}};
\node (v34) at (3*\xx,-1.05*\yy) {\stages{white}{34}};
\node (v35) at (3*\xx,-1.35*\yy) {\stages{pink}{35}};
\node (v36) at (3*\xx,-1.65*\yy) {\stages{yellow}{36}};
\node (v37) at (3*\xx,-1.95*\yy) {\stages{pink}{37}};
\node (v38) at (3*\xx,-2.25*\yy) {\stages{gray}{38}};
\node (v39) at (3*\xx,-2.55*\yy) {\stages{pink}{39}};
\node (v40) at (3*\xx,-2.85*\yy) {\stages{pink}{40}};
\node (v41) at (3*\xx,-3.15*\yy) {\stages{pink}{41}};
\node (v42) at (3*\xx,-3.45*\yy) {\stages{gray}{42}};
\node (l1) at (4*\xx,3.5*\yy){\leaf};
\node (l2) at (4*\xx,3.4*\yy){\leaf};
\node (l3) at (4*\xx,3.2*\yy){\leaf};
\node (l4) at (4*\xx,3.1*\yy){\leaf};
\node (l5) at (4*\xx,2.9*\yy){\leaf};
\node (l6) at (4*\xx,2.8*\yy){\leaf};
\node (l7) at (4*\xx,2.6*\yy){\leaf};
\node (l8) at (4*\xx,2.5*\yy){\leaf};
\node (l9) at (4*\xx,2.3*\yy){\leaf};
\node (l10) at (4*\xx,2.2*\yy){\leaf};
\node (l11) at (4*\xx,2*\yy){\leaf};
\node (l12) at (4*\xx,1.9*\yy){\leaf};
\node (l13) at (4*\xx,1.7*\yy){\leaf};
\node (l14) at (4*\xx,1.6*\yy){\leaf};
\node (l15) at (4*\xx,1.4*\yy){\leaf};
\node (l16) at (4*\xx,1.3*\yy){\leaf};
\node (l17) at (4*\xx,1.1*\yy){\leaf};
\node (l18) at (4*\xx,1*\yy){\leaf};
\node (l19) at (4*\xx,0.8*\yy){\leaf};
\node (l20) at (4*\xx,0.7*\yy){\leaf};
\node (l21) at (4*\xx,0.5*\yy){\leaf};
\node (l22) at (4*\xx,0.4*\yy){\leaf};
\node (l23) at (4*\xx,0.2*\yy){\leaf};
\node (l24) at (4*\xx,0.1*\yy){\leaf};

\node (l48) at (4*\xx,-3.5*\yy){\leaf};
\node (l47) at (4*\xx,-3.4*\yy){\leaf};
\node (l46) at (4*\xx,-3.2*\yy){\leaf};
\node (l45) at (4*\xx,-3.1*\yy){\leaf};
\node (l44) at (4*\xx,-2.9*\yy){\leaf};
\node (l43) at (4*\xx,-2.8*\yy){\leaf};
\node (l42) at (4*\xx,-2.6*\yy){\leaf};
\node (l41) at (4*\xx,-2.5*\yy){\leaf};
\node (l40) at (4*\xx,-2.3*\yy){\leaf};
\node (l39) at (4*\xx,-2.2*\yy){\leaf};
\node (l38) at (4*\xx,-2*\yy){\leaf};
\node (l37) at (4*\xx,-1.9*\yy){\leaf};
\node (l36) at (4*\xx,-1.7*\yy){\leaf};
\node (l35) at (4*\xx,-1.6*\yy){\leaf};
\node (l34) at (4*\xx,-1.4*\yy){\leaf};
\node (l33) at (4*\xx,-1.3*\yy){\leaf};
\node (l32) at (4*\xx,-1.1*\yy){\leaf};
\node (l31) at (4*\xx,-1*\yy){\leaf};
\node (l30) at (4*\xx,-0.8*\yy){\leaf};
\node (l29) at (4*\xx,-0.7*\yy){\leaf};
\node (l28) at (4*\xx,-0.5*\yy){\leaf};
\node (l27) at (4*\xx,-0.4*\yy){\leaf};
\node (l26) at (4*\xx,-0.2*\yy){\leaf};
\node (l25) at (4*\xx,-0.1*\yy){\leaf};
\draw[->] (v0) --  node [above, sloped] {\scriptsize{80+}} (v1);
\draw[->] (v0) --  node [above, sloped] {\scriptsize{70-79}} (v2);
\draw[->] (v0) -- node [below, sloped] {\scriptsize{60-69}}(v3);
\draw[->] (v0) --  node [above, sloped] {\scriptsize{50-59}}(v4);
\draw[->] (v0) --  node [below, sloped] {\scriptsize{40-49}}(v5);
\draw[->] (v0) --  node [below, sloped] {\scriptsize{0-39}}(v6);
\draw[->] (v6) --  node [above, sloped] {\scriptsize{male}}(v17);
\draw[->] (v6) --  node [below, sloped] {\scriptsize{female}}(v18);
\draw[->] (v5) --  node [above, sloped] {\scriptsize{male}}(v15);
\draw[->] (v5) --  node [below, sloped] {\scriptsize{female}}(v16);
\draw[->] (v4) --  node [above, sloped] {\scriptsize{male}}(v13);
\draw[->] (v4) --  node [below, sloped] {\scriptsize{female}}(v14);
\draw[->] (v3) --  node [above, sloped] {\scriptsize{male}}(v11);
\draw[->] (v3) --  node [below, sloped] {\scriptsize{female}}(v12);
\draw[->] (v2) --  node [above, sloped] {\scriptsize{male}}(v9);
\draw[->] (v2) --  node [below, sloped] {\scriptsize{female}}(v10);
\draw[->] (v1) --  node [above, sloped] {\scriptsize{male}}(v7);
\draw[->] (v1) --  node [below, sloped] {\scriptsize{female}}(v8);
\draw[->] (v18) --  node [above, sloped] {\scriptsize{no-ICU}}(v41);
\draw[->] (v18) --  node [below, sloped] {\scriptsize{ICU}}(v42);
\draw[->] (v17) --  node [above, sloped] {\scriptsize{no-ICU}}(v39);
\draw[->] (v17) --  node [below, sloped] {\scriptsize{ICU}}(v40);
\draw[->] (v16) --  node [above, sloped] {\scriptsize{no-ICU}}(v37);
\draw[->] (v16) --  node [below, sloped] {\scriptsize{ICU}}(v38);
\draw[->] (v15) --  node [above, sloped] {\scriptsize{no-ICU}}(v35);
\draw[->] (v15) --  node [below, sloped] {\scriptsize{ICU}}(v36);
\draw[->] (v14) --  node [above, sloped] {\scriptsize{no-ICU}}(v33);
\draw[->] (v14) --  node [below, sloped] {\scriptsize{ICU}}(v34);
\draw[->] (v13) --  node [above, sloped] {\scriptsize{no-ICU}}(v31);
\draw[->] (v13) --  node [below, sloped] {\scriptsize{ICU}}(v32);
\draw[->] (v12) --  node [above, sloped] {\scriptsize{no-ICU}}(v29);
\draw[->] (v12) --  node [below, sloped] {\scriptsize{ICU}}(v30);
\draw[->] (v11) --  node [above, sloped] {\scriptsize{no-ICU}}(v27);
\draw[->] (v11) --  node [below, sloped] {\scriptsize{ICU}}(v28);
\draw[->] (v10) --  node [above, sloped] {\scriptsize{no-ICU}}(v25);
\draw[->] (v10) --  node [below, sloped] {\scriptsize{ICU}}(v26);
\draw[->] (v9) --  node [above, sloped] {\scriptsize{no-ICU}}(v23);
\draw[->] (v9) --  node [below, sloped] {\scriptsize{ICU}}(v24);
\draw[->] (v8) --  node [above, sloped] {\scriptsize{no-ICU}}(v21);
\draw[->] (v8) --  node [below, sloped] {\scriptsize{ICU}}(v22);
\draw[->] (v7) --  node [above, sloped] {\scriptsize{no-ICU}}(v19);
\draw[->] (v7) --  node [below, sloped] {\scriptsize{ICU}}(v20);
\foreach \n in {19,...,42}{
\draw[->] (v\n) --  node [above, sloped] {\scriptsize{survived}}(l\number\numexpr2*\n-38 + 1\relax);
\draw[->] (v\n) --  node [below, sloped] {\scriptsize{death}}(l\number\numexpr2*\n-38 + 2\relax);
}

\end{tikzpicture}
}
\caption{Staged tree obtained with the 
BHC method from the data on trajectories of covid patients. \label{fig:stcovid}}
\end{figure}
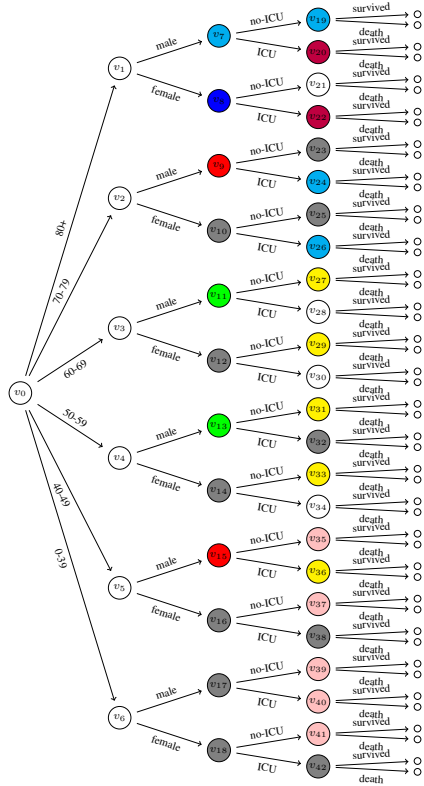

\begin{figure}
    \centering
\begin{tikzpicture}[
            > = stealth, 
            shorten > = 1pt, 
            auto,
            node distance = 1.5cm, 
            semithick 
        ]

        \tikzstyle{every state}=[
            draw = black,
            thick,
            fill = white,
            minimum size = 13mm
        ]

        \node[state] at(0,0) (1) {gender};
        \node[state] at(2,-2) (2) {age};
        \node[state] at(4,0)  (3) {ICU};
        \node[state] at (2,2) (4){death};
        
        \path[->] (1) edge   (4);
        \path[->] (1) edge   (3);
        \path[->] (2) edge   (4);
        \path[->] (2) edge   (3);
        \path[-] (4) edge   (3);

 \end{tikzpicture}
 \hspace{1cm}
 \begin{tikzpicture}[
            > = stealth, 
            shorten > = 1pt, 
            auto,
            node distance = 1.5cm, 
            semithick 
        ]

        \tikzstyle{every state}=[
            draw = black,
            thick,
            fill = white,
            minimum size = 13mm
        ]

        \node[state] at(0,0) (1) {gender};
        \node[state] at(2,-2) (2) {age};
        \node[state] at(4,0)  (3) {ICU};
        \node[state] at (2,2) (4){death};
        
        \path[-] (1) edge   (4);
        \path[-] (1) edge   (3);
        \path[-] (2) edge   (4);
        \path[-] (2) edge   (3);
        \path[-] (4) edge   (3);
        
 \end{tikzpicture}
    \caption{CPDAGs for the covid data, constructed with  PC-stable (left) and tabu search (right).}
    \label{fig:cpdagcovid}
\end{figure}
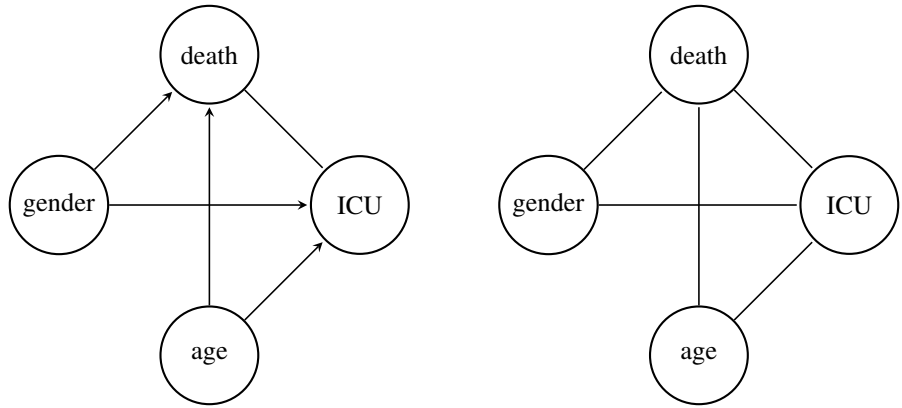

\subsubsection{ENSO Effects on Spring Precipitation in
Australia}

We continue here the analysis of the 
relationship between ENSO, IOD and spring 
precipitation in Australia (AU). 
In Figure~\ref{fig:stenso} we show the 
learned staged tree with the BHC algorithm with
the alternative order IOD, ENSO, AU.
The model attains an AIC of 371.04 (compared to
368.23 with the ENSO,IOD,AU order).
We can observe that the staging of the AU variable is the same (obviously with a node permutation in the plots), this fact is a consequence that 
the optimal staging of the nodes for 
a given variable (AU) is not dependent on the 
order of the previous variables (ENSO, IOD).

Thus the only real difference between the two staged trees is in the first two variables. 
We can see that for the ENSO$\rightarrow$IOD order, the method recognizes a so-called partial 
independence between ENSO and IOD 
$(P(\text{IOD}|\text{ENSO}= \text{Niña})=P(\text{IOD}|\text{ENSO}=\text{neutral})\neq P(\text{IOD}|\text{ENSO} = \text{Niño})$.
For the IOD$\rightarrow$ENSO model, instead 
no independence statement is found and the 
full model for the first two variables is 
obtained. Since the full model 
for ENSO and IOD could be written equivalently
with either of the two variables as the first one,
we can deduce that the ENSO$\rightarrow$IOD
ordering is thus supported by data under the assumption that the real model is the simpler one
that explains well the data.

\begin{figure}
    \centering
\scalebox{0.7}{
\begin{tikzpicture}
\renewcommand{\xx}{2.4}
\renewcommand{\yy}{1.2}
\node (w0) at (0*\xx,0*\yy) {\stage{white}{0}{w}};
\node (w1) at (1*\xx,3*\yy) {\stage{green}{1}{w}};
\node (w2) at (1*\xx,0*\yy) {\stage{gray}{2}{w}};
\node (w3) at (1*\xx,-3*\yy) {\stage{purple}{3}{w}};
\node (w4) at (2*\xx,3.8*\yy) {\stage{gray}{4}{w}};
\node (w5) at (2*\xx,2.5*\yy) {\stage{red}{5}{w}};
\node (w7) at (2*\xx,1*\yy) {\stage{gray}{7}{w}};
\node (w8) at (2*\xx,0*\yy) {\stage{red}{8}{w}};
\node (w9) at (2*\xx,-1*\yy) {\stage{yellow}{9}{w}};
\node (w10) at (2*\xx,-2*\yy) {\stage{gray}{10}{w}};
\node (w11) at (2*\xx,-3*\yy) {\stage{yellow}{11}{w}};
\node (w12) at (2*\xx,-4*\yy) {\stage{yellow}{12}{w}};
\node (l1) at (3*\xx,4*\yy) {\leaf};
\node (l2) at (3*\xx,3.6*\yy) {\leaf};
\node (l3) at (3*\xx,2.7*\yy) {\leaf};
\node (l4) at (3*\xx,2.3*\yy) {\leaf};
\node (l5) at (3*\xx,1.2*\yy) {\leaf};
\node (l6) at (3*\xx,0.8*\yy) {\leaf};
\node (l7) at (3*\xx,0.2*\yy) {\leaf};
\node (l8) at (3*\xx,-0.2*\yy) {\leaf};
\node (l9) at (3*\xx,-0.8*\yy) {\leaf};
\node (l10) at (3*\xx,-1.2*\yy) {\leaf};
\node (l11) at (3*\xx,-1.8*\yy) {\leaf};
\node (l12) at (3*\xx,-2.2*\yy) {\leaf};
\node (l13) at (3*\xx,-2.8*\yy) {\leaf};
\node (l14) at (3*\xx,-3.2*\yy) {\leaf};
\node (l15) at (3*\xx,-3.8*\yy) {\leaf};
\node (l16) at (3*\xx,-4.2*\yy) {\leaf};
\draw[->] (w0) -- node [above, sloped] {-} (w1);
\draw[->] (w0) -- node [below, sloped] {0} (w2);
\draw[->] (w0) -- node [below, sloped] {+} (w3);
\draw[->] (w1) -- node [above, sloped] {Niña} (w4);
\draw[->] (w1) -- node [above, sloped] {neutral} (w5);
\draw[->] (w2) -- node [above, sloped] {Niña} (w7);
\draw[->] (w2) -- node [above, sloped] {neutral} (w8);
\draw[->] (w2) -- node [above, sloped] {Niño} (w9);
\draw[->] (w3) -- node [above, sloped] {Niña} (w10);
\draw[->] (w3) -- node [above, sloped] {neutral} (w11);
\draw[->] (w3) -- node [above, sloped] {Niño} (w12);
\draw[->] (w4)--node [above, sloped] {high} (l1); \draw[->] (w4)--node [below, sloped] {low}(l2);
\draw[->] (w5)--node [above, sloped] {high}(l3); \draw[->] (w5) --node [below, sloped] {low}(l4);
\draw[->] (w7)--node [above, sloped] {high}(l5); \draw[->] (w7) --node [below, sloped] {low}(l6);
\draw[->] (w8)--node [above, sloped] {high}(l7); \draw[->] (w8) --node [below, sloped]{low}(l8);
\draw[->] (w9)--node [above, sloped] {high}(l9); \draw[->] (w9) --node [below, sloped]{low}(l10);
\draw[->](w10)--node[above, sloped]{high}(l11);\draw[->](w10)--node[below,sloped]{low}(l12);
\draw[->](w11)--node [above, sloped]{high}(l13);\draw[->] (w11)--node[below,sloped]{low}(l14);
\draw[->](w12)--node[above, sloped]{high}(l15);\draw[->](w12)--node[below,sloped]{low}(l16);
\end{tikzpicture}
}
    \caption{Staged tree estimated with the BHC algorithm for the 
    ENSO-IOD-AU example and 
    estimated conditional probabilities for 
    high AU in the three recovered stages.}
    \label{fig:stenso}
\end{figure}
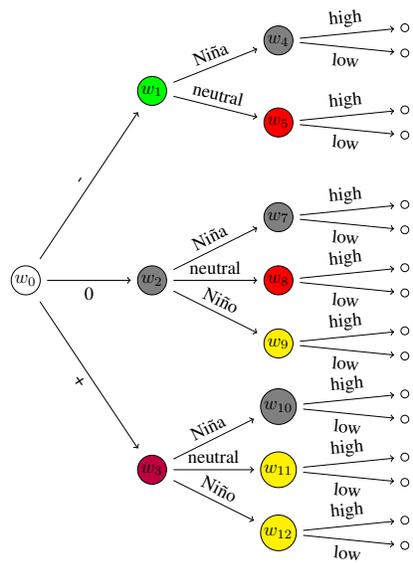

\vfill

\end{document}